  \let\oldparagraph\paragraph
  \renewcommand{\paragraph}{
    \@ifstar
      \xxxParagraphStar
      \xxxParagraphNoStar
  }
  \newcommand{\xxxParagraphStar}[1]{\oldparagraph*{#1}\mbox{}}
  \newcommand{\xxxParagraphNoStar}[1]{\oldparagraph{#1}\mbox{}}
  \let\oldsubparagraph\subparagraph
  \renewcommand{\subparagraph}{
    \@ifstar
      \xxxSubParagraphStar
      \xxxSubParagraphNoStar
  }
  \newcommand{\xxxSubParagraphStar}[1]{\oldsubparagraph*{#1}\mbox{}}
  \newcommand{\xxxSubParagraphNoStar}[1]{\oldsubparagraph{#1}\mbox{}}
\providecommand{\tightlist}{%
  \setlength{\itemsep}{0pt}\setlength{\parskip}{0pt}}\usepackage{longtable,booktabs,array}
\patchcmd\longtable{\par}{\if@noskipsec\mbox{}\fi\par}{}{}
\def\maxwidth{\ifdim\Gin@nat@width>\linewidth\linewidth\else\Gin@nat@width\fi}
\def\maxheight{\ifdim\Gin@nat@height>\textheight\textheight\else\Gin@nat@height\fi}
\def\fps@figure{htbp}
  \renewcommand*\contentsname{Table of contents}
  \newcommand\contentsname{Table of contents}
  \renewcommand*\listfigurename{List of Figures}
  \newcommand\listfigurename{List of Figures}
  \renewcommand*\listtablename{List of Tables}
  \newcommand\listtablename{List of Tables}
  \renewcommand*\figurename{Figure}
  \newcommand\figurename{Figure}
  \renewcommand*\tablename{Table}
  \newcommand\tablename{Table}
\newcommand{\anon}{1}
\newtheorem{lemma}{Lemma}
\newtheorem{theorem}{Theorem}
\newtheorem{objective}{Objective}
\newtheorem{definition}{Definition}[section]
\newtheorem{proposition}{Proposition}[section]
\newtheorem{remark}{Remark}[section]
\newcommand{\R}{\mathbb{R}}
\newcommand{\E}{\mathbb{E}}
\newcommand{\pr}{\mathbb{P}}
\newcommand{\T}{\mathbf{H}}
\newcommand{\Tinst}{H_{ij}}
\newcommand{\I}[1]{\textbf{1}\{ #1 \}}
\newcommand{\Ei}{\bar{A}_i}
\newcommand{\ei}{\bar{a}_i}
\newcommand{\muTrue}{\mu(\ei; \Xouti, \outCoef_0)}
\newcommand{\Xouti}{\mathbf{X}^{out}_i}
\newcommand{\Xintj}{\mathbf{X}^{int}_j}
\newcommand{\PM}{PM$_{2.5}$\xspace}
\newcommand{\nBeneficiaries}{2,213,471\xspace}
\newcommand{\Wweight}{\nu}
\newcommand{\dual}{\texttt{Percent Poor}\xspace}
\newcommand{\Imb}{\texttt{Disparity}\xspace}
\newcommand{\dualLow}{\texttt{Low Poverty Group}\xspace}
\newcommand{\dualHigh}{\texttt{High Poverty Group}\xspace}
\newcommand{\TotEff}{\texttt{TotalEffect}\xspace}
\newcommand{\EstTotEff}{\widehat{\texttt{TotalEffect}}\xspace}
\newcommand{\SNR}{3\xspace}
\newcommand{\TotPop}{\texttt{TotPop}\xspace}
\newcommand{\TotHeatInp}{\texttt{TotHeatInput}\xspace}
\newcommand{\TotOp}{\texttt{TotOpTime}\xspace}
\newcommand{\bigO}{\mathcal{O}}
\newcommand{\totalCost}{\$48,757,024,000}
\DeclareMathOperator*{\argmin}{arg\,min}
\newcommand{\outCoef}{\pmb{\theta}}
\newcommand{\outCoefb}{\pmb{\alpha}}
\newcommand{\outCoeftx}{\pmb{\beta}}
\newcommand{\intCoef}{\pmb{\gamma}}
\newcommand{\XJ}{\mathbf{\mathbb{X}}_{1:J}^{int}}
\newcommand{\Xn}{\mathbf{\mathbb{X}}_{1:n}^{out}}
\newcommand{\Xj}{\Xintj}
\newcommand{\etaMap}{\eta(\Xn, \T_j)}
\newcommand{\etaMapS}{\eta(\Xn, \T_j)}
\newcommand{\B}{\mathcal{B}}
\newcommand{\al}{A-Learning\xspace}
\newcommand{\policy}{\pi}
\newcommand\numberthis{\addtocounter{equation}{1}\tag{\theequation}}
\providecommand{\tightlist}{%
\setlength{\itemsep}{0pt}\setlength{\parskip}{0pt}}
\begin{document}

\def\spacingset#1{\renewcommand{\baselinestretch}%
{#1}\small\normalsize} \spacingset{1}


\renewcommand\Authfont{\fontsize{12}{14.4}\selectfont} 
\renewcommand\Affilfont{\fontsize{11}{10.8}\itshape} 

\if1\anon
{
  \title{\vspace{-2em} \spacingset{1} \Large Fair Policy Learning under Bipartite Network Interference: Learning Fair and Cost-Effective Environmental Policies}
  \date{}
  \author[1]{Raphael C. Kim}
  \author[1]{Rachel C. Nethery}
  \author[1]{Kevin L. Chen}
  \author[2,1]{Falco J. Bargagli-Stoffi}
  \affil[1]{Department of Biostatistics, Harvard T.H. Chan School of Public Health (Boston, MA)}
  \affil[2]{Department of Biostatistics, UCLA Fielding School of Public Health (Los Angeles, CA)}
  \maketitle
} \fi

\if0\anon
{
  \bigskip
  \bigskip
  \bigskip
  \begin{center}
      {\LARGE\bf Fair Policy Learning under Bipartite Network Interference: Learning Fair and Cost-Effective Environmental Policies}
\end{center}
  \medskip
} \fi

\bigskip
\begin{abstract}
Numerous studies have shown the harmful effects of airborne pollutants on human health. Vulnerable groups and communities often bear a disproportionately larger health burden due to exposure to airborne pollutants. Thus, there is a need to design policies that effectively reduce the public health burdens while ensuring cost-effective policy interventions. Designing policies that optimally benefit the population while ensuring equity between groups under cost constraints is a challenging statistical and causal inference problem. In the context of environmental policy this is further complicated by the fact that interventions target emission sources but health impacts occur in potentially distant communities due to atmospheric pollutant transport---a setting known as \textit{bipartite network interference} (BNI). To address these issues, we propose a fair policy learning approach under BNI. Our approach allows to learn cost-effective policies under fairness constraints even accounting for complex BNI data structures. We derive asymptotic properties and demonstrate finite sample performance via Monte Carlo simulations. Finally, we apply the proposed method to a real-world dataset linking power plant scrubber installations to Medicare health records for more than 2 million individuals in the U.S. Our method determine fair scrubber allocations to reduce mortality under fairness and cost constraints.
\end{abstract}

\noindent%
{\it Keywords: Causal Inference; Interference; Fairness; Pareto Optimality; Cost-Effectiveness; Environmental Health; Air Pollution} 
\vfill

\newpage
\spacingset{1.8} 

 \section{Introduction}\label{sec:intro}

\subsection{Motivation}

Airborne pollutants, most notably fine particulate matter of an aerodynamic diameter smaller than 2.5$\mu m$ (\PM), pose a major risk to human health. Many studies have demonstrated the significant relationship between short-term and long-term exposure to fine \PM exposure and heightened risks of mortality and morbidity \citep{Nethery2020,Wu2020,Henneman2023}. Furthermore, there is evidence that specific individuals and communities---including low-income, older, and certain minority groups---bear disproportionate burdens from exposure to \PM\citep{bargaglistofficre,jbaily2022air,Josey2023}.

Coal-fired power plants are the largest source of sulfur dioxide emissions in the U.S., which is a major contributor to secondary \PM formation \citep{Massetti2017}. A recent study estimated that coal-fired power plant emissions were related to 460,000 additional deaths in the U.S. from 1999 to 2020 \citep{Henneman2023}. Using 2024 value of statistical life estimates, the economic impact of such additional deaths might amount to approximately \$300 billion per year \citep{kearsley2024hhs} which is roughly 1\% of US GDP \citep{bea_gdp}. 

An effective intervention strategy for reducing \PM concentrations is installing flue gas desulfurization (FGD) equipment (so called ``scrubbers'') on coal-fired power plants \citep{ZiglerStatSci2021}. Scrubbers, in fact, have proven to be an effective solution, removing at least 90\% of sulfur dioxide emissions from power plants on which they are applied \citep{Srivastava2001} and, consequently, reducing the health burdens associated with exposure to power plants emissions.

Given the astronomical impact of \PM on health and the economy, and its disproportionate burden on certain vulnerable groups, it is of paramount importance to design fair and cost-effective environmental policies aimed at protecting human health by reducing \PM exposure \citep{usepa2022a}. 

This research examines how to develop treatment allocation policies that balance fairness with Pareto efficiency in the context of environmental policies. We explore targeting strategies where policymakers seek allocations that achieve Pareto optimality: that is, no alternative policy could enhance outcomes for one sensitive group without diminishing outcomes for another. From this Pareto-efficient frontier, the decision-maker selects the allocation that best satisfies fairness criteria. This framework applies to health and social welfare program design and draws inspiration from the medical principle of ``\textit{primum non nocere}'' (first do no harm) \citep{rotblat1999hippocratic}.

To inform the design of such Pareto optimal cost-effective policies, we need statistical tools that can provide reliable estimates of health benefits and determine where interventions should be targeted according to Pareto efficiency.
In this paper, we develop a method to address this challenge.

\subsection{Related Work}

Statistical methods used for studying the impacts of large-scale environmental policies aimed at curbing emissions must address two problem features that cannot be accommodated in conventional causal inference methods. First, intervention strategies are implemented on coal-fired power plants (which we will refer to as ``intervention units''), but health impacts are measured in surrounding communities (which we will refer to as ``outcome units''). This creates a so-called \textit{bipartite network} data structure. Second, the complex process in which airborne pollutants react in the atmosphere and are transported (e.g., by the wind) means that intervening at a single power plant can potentially affect health in many distant communities, resulting in a network of connections between power plants and communities. 

In causal inference, the presence of such connections creates \textit{interference}, where the effects of one unit's treatment can spill to (possibly many) other units and can affect their potential outcomes \citep{cox1958planning, hudgens2008toward, dominici2021controlled}. The aggregate of these two features, bipartite network data and interference, has been referred to as \textit{bipartite network interference} (BNI) \citep{ZiglerStatSci2021}.

The BNI literature has primarily focused on causal estimation and inference, spanning methods for average and heterogeneous treatment effects in both cross-sectional and longitudinal settings \citep{PougetAbadieNeurips2019,doudchenko2020causal,ZiglerStatSci2021,chen2024environmental,chen2024differenceindifferences,song2024bipartite,ZiglerForastiere25Bsts}. Several studies have used these methods to assess the effectiveness of environmental policies, particularly in the context of air quality interventions \citep{ZiglerStatSci2021, chen2024environmental, chen2024differenceindifferences, song2024bipartite,ZiglerForastiere25Bsts}. However, while these methods excel at assessing whether these policies work, they offer no guidance on targeting; that is, they do not inform us about who should receive treatment to maximize the policy impact, or ensure fairness.

Policy learning methods, while well-developed in non-interference \citep{Schulte_2014,KosorokLaber19} and standard (non-BNI) interference settings \citep{suModelingEstimation2019, viviano2024policy, ParkMRTP24,zhang2025individualizedpolicyevaluationlearning}, have received limited attention in BNI contexts. 
Furthermore, the majority of the fairness literature has primarily focused on non-interference settings \citep{nabi2018fair, Nabi2019, kimZubi23, frauen2023fairoffpolicylearningobservational, viviano2024fair}, with limited work in the interference setting \citep{YangCuiFairInt24}.
A particular challenge for adapting these policy learning methods to BNI settings is that interventions are non-separable across subpopulations --- that is, a single treatment decision can simultaneously impact multiple subgroups. As a result, optimal or fair policy learning must balance their objectives against this constraint.
To our knowledge, only \cite{kim2024optimalenvironmentalpoliciespolicy} has explored the development of optimal policy methods under BNI with cost constraints, but they did not consider fairness objectives. Critically, existing methods are not tailored to answer the question: ``\textit{How should policies on bipartite networks be designed to maximize both cost-effectiveness and equity?}'' However, this consideration is essential for both medical decision-making and environmental policy, where interventions often have disparate impacts on vulnerable communities.

\subsection{Contribution and Organization}

To account for the shortcomings detailed in the previous sections and answer the policy-relevant question, we propose a novel fair policy learning method for BNI settings. To the best of our knowledge, we are the first to consider fair policy learning under BNI. 
Our approach optimally balances welfare gains across subgroups by ensuring policies achieve \textit{Pareto optimality} under the \textit{first do no harm} principle. 
This framework offers important advantages over typical counterfactual fairness approaches, which do not inherently guarantee such welfare protections.

We derive the theoretical optimality conditions for our method, demonstrating that our method yields a solution on the Pareto frontier with high probability, and achieves $\bigO_p(n^{-\frac{1}{2}})$ regret bounds.
Then, we test our method in Monte Carlo simulations to assess its empirical performance.
Finally, we demonstrate its application using a rich dataset of Medicare claims from \nBeneficiaries beneficiaries, combined with power plant locations and characteristics, pollution transport networks, community demographics, and scrubber installation costs. 

The goal of our application is to learn power plant scrubber installation policies that maximize fairness in health benefits across communities with high vs. low rates of Medicaid eligibility (a proxy for poverty) under cost constraints. Ensuring that a policy is fair to the Medicaid eligible (high poverty) subgroup might be of particular interest because impoverished individuals are known to be more susceptible to pollution-related health harms \citep{Josey2023}. Moreover, because the costs of healthcare for Medicaid enrollees are borne by federal and state governments, policymakers may have an economic interest in ensuring that environmental policies provide sufficient protection of health in this group.

In fact, our approach ensures that (i) no population subgroup gains with another subgroup being harmed, and (ii) unfairness between subgroups is minimized subject to a cost-constraint (cost-effectiveness). We evaluate the performance of our proposed methods under various budgetary and welfare constraints, and examine the extent of unfairness in the analyzed scrubber installation policy. This analysis provides insights into the trade-offs required to achieve fair policies, including the costs involved and the corresponding reductions in mortality rates.

The rest of the paper is organized as follows: \cref{sec:setup} formally introduces the mathematical setup, the objective and identification. \cref{sec:method} introduces our method and its theoretical properties. In \cref{sec:simulations}, we investigate the finite-sample performance of our proposed method through Monte Carlo simulations. In \cref{sec:rwd}, we present the details and results of our motivating application on the scrubber installation policy analysis. Finally, we end with a discussion of the proposed method and its limitations in \cref{sec:conclusions}.
\section{Setup}\label{sec:setup}

Suppose we have $J$ intervention units indexed by $j \in [J]$ and $n$ outcome units indexed by $i \in [n]$. In our motivating application, intervention units correspond to power plants while outcome units correspond to communities, as defined by ZIP codes. Let $Y_i$ denote the observed outcome for outcome unit $i$---e.g., the mortality rates at the ZIP code level.

Let $\Xouti \in \mathbb{R}^p$ denote the vector of covariates for outcome unit $i$, and $\Xintj \in \mathbb{R}^q$ denote the vector of covariates for intervention unit $j$. $\XJ$ denotes the covariates for all intervention units or $\XJ=\{ \Xintj \}_{j \in [J]}$. $\T \in \mathbb{R}^{n\times J}$ or $\{ \T_{ij} \}$ denotes the `\textit{interference map}' or `\textit{bipartite adjacency matrix}'. 
The elements of $\T$ correspond to the strength of pollution transport from a particular intervention unit $j$ to each outcome unit $i$---e.g., $\T_{ij}$ represents the strength of connections between outcome unit $i$ and intervention unit $j$. Thus, $\T_i=(\T_{i1} \dots \T_{ij} \dots \T_{iJ})$ denotes a row of the interference map which represents the strength of the connection between outcome unit $i$ and all $J$ intervention units. Conversely, $\T_j^\top=(\T_{1j} \dots \T_{ij} \dots \T_{nj})^\top$ denotes the column of the interference map which captures the strength of connections between intervention unit $j$ and all $n$ outcome units. 
In our motivating application, $\T$ is derived from meteorological variables (primarily wind patterns) obtained from the National Oceanic and Atmospheric Administration (NOAA) Air Resources Laboratory and aggregated over space and time. The construction of these measures is described in detail by \cite{Henneman2019}, and has previously been used to evaluate the causal effects of environmental policies \citep{chen2024environmental, kim2024optimalenvironmentalpoliciespolicy}.

$\mathbf{A}$ denotes the treatment status (or intervention status) vector. In particular, $\mathbf{A}=(A_1 \dots A_j \dots A_J) \in \{ 0, 1 \}^J$, where 1 corresponds to treating a particular power plant and 0 corresponds to not treating. In our application, treatments denote whether or not a scrubber is installed at a power plant. Outcomes will depend on the so-called exposure mapping \citep{AronowSamii2017, bargagli2025heterogeneous}, which maps the vector of intervention unit-level treatments to outcome units. In our work, we use the following exposure mapping: $\Ei=\frac{1}{J} \sum_{j=1}^J \Tinst A_j$; that is, the linear combination of treatments at the intervention unit level weighted by the strength of the connection between the outcome unit $i$ and intervention unit $j$ (interference map). 
Let $Y_i(\ei)$ denote the potential outcome for unit $i$ under exposure level $\ei$. Note that, in this work, we will consider smaller values of $Y_i$ as desirable---e.g., a smaller mortality rate at the ZIP code level. 

We aim to ensure fairness across subgroups defined by an outcome covariate $S$. Specifically, we treat the subgroup indicator $S \in \{0,1\}$ as an additional outcome-level covariate. In other words, $S_i$ and $\Xouti$ form the complete set of outcome-level covariates. It will be convenient to let $\Xn$ denote the covariates for all outcome units, or $\{ (\Xouti, S_i)_{i} \}_{i \in [n]}$.

Let the policy function be given by $\pi: \Xintj \times \eta(\Xn, \T_j) \mapsto [0,1]$, where $\eta$ summarizes the covariates from all $(S_i, \Xouti)$ into a vector in $\R^{p +1}$ using $\T_j$; see \cref{remark:policyFx} for more on this formulation. Additionally define $e_j$ as the propensity score for intervention unit $j$, or $e_j=\mathbb{P}(A_j=1 \mid \Xj)$.

Notationally, $\E$ will denote the mean with respect to any variables not explicitly conditioned on---including $\Xouti, \Xintj$ and $Y(\ei)$---and $\E_\pi$ will mean the expectation taken under policy $\pi$. $\E_n$ denotes the empirical mean as in the empirical process literature \citep{kosorok2008introduction}. $\mathcal{N}(\varepsilon, \mathcal{F}, L^2)$ will denote the covering number of $\mathcal{F}$ at scale $\varepsilon$ and metric $L^2$ \citep{kosorok2008introduction}. $\lesssim$ will mean $\leq$ up to a finite, positive constant. 

\begin{remark}\label{remark:policyFx}
    Above, we have considered the policy function as a mapping from (i) the intervention covariates and (ii) some functional of our outcome unit measures and intervention map, to a probability. This is different than non-bipartite interference settings since the policy decision about whether or not we treat an intervention unit is governed by more than just intervention level covariates. The choice of $\eta$ depends on the scientific application. For example, in our study of the effects of air pollution, we might take $\eta$ to be an $\T_j$ weighted summary of our outcome covariates, $\eta(\Xn, \T_j)=\frac{1}{n} \sum_{i=1}^n \Tinst (S_i, \Xouti)$. Here, $\eta$ averages outcome covariates using $\T_j$ to up or down-weight the covariate value according to how much power plant $j$ affects a given ZIP code $i$.  Such approaches for summarizing group-level covariates are common in interference literature (e.g. \cite{ParkMRTP24, kim2024optimalenvironmentalpoliciespolicy})
\end{remark}

\begin{remark}
    The policy function $\policy$ depends on intervention unit $j$, and $\policy$  does not vary with a particular value of $S_i$. This is because we are in the bipartite network interference setting, where treating an intervention unit affects potentially many outcome units. Thus, a policy decision will affect all subgroups $s \in S$, and a policy learner must optimize whichever objectives of interest under this setup.
\end{remark}

\subsection{Identification}

We begin by demonstrating causal identification. 
Following \cite{kim2024optimalenvironmentalpoliciespolicy}, we make the following identifying assumptions:
\begin{enumerate}[label={\bfseries (Id\arabic*)}]
\tightlist
    \item \textit{ Consistency of outcome units}: \label{ass:consistency} $Y_i=Y_i(\ei)$.
    \item \textit{ Positivity}: \label{ass:positivity} 
    $P(\Ei=\ei \mid \Xouti, \XJ, \T_i) > 0$ for all $\ei$.
    \item \textit{ Unconfoundedness}: \label{ass:noConfounding}
    $Y_i(\ei) \perp \ei \mid \Xouti, \XJ, \T_i$.
    \item \textit{ Intervention covariates are independent of potential outcomes given the exposure mapping $\ei$ and outcome covariates}: \label{ass:interferenceMapRichness} $Y_i(\ei) \perp \XJ \mid \Xouti, \T_i, \ei $.
\end{enumerate}

\labelcref{ass:consistency}-\labelcref{ass:noConfounding} are adaptations of the standard consistency, positivity, and unconfoundedness assumptions in causal inference to our BNI setting. Assumption \labelcref{ass:interferenceMapRichness} encodes domain knowledge from air pollution epidemiology \citep{Henneman2019}: outcome unit potential outcomes are determined by aggregate exposure levels $\Ei$ (air pollution levels) rather than individual treatment values of intervention units $A_j$. Thus, when learning the distribution of $Y_i(\ei)$ given $\Ei$ and $\Xouti$, it is not necessary to condition on $\XJ$.

\cite{kim2024optimalenvironmentalpoliciespolicy} has shown that, under assumptions {\labelcref{ass:consistency}} -- {\labelcref{ass:interferenceMapRichness}}, we have the following identification results:
    \begin{equation}
        \E[Y_i(\ei)]= \E\big[\E[Y_i \mid \Xouti, \T_i, \Ei=\ei]\big].
    \end{equation}

\subsection{Modeling Assumptions}

Motivated by domain knowledge from air pollution transport and epidemiology \citep{Henneman2019}, we make the following outcome model assumption, and summary mapping assumption.

\begin{enumerate}[label=(\textbf{M\arabic*})]
\tightlist
    \item \label{ass:outcomeModel} \textit{Outcome model is linear in the exposure mapping}:
    \begin{align*}
    Y_i(\ei; \outCoef_0) &= \muTrue+\epsilon_i =  f_0(\Xouti, \outCoefb_0)+ \ei \cdot f_A(\Xouti, \outCoeftx_0) +\epsilon_i \numberthis \label{eq:meanModel}
    \end{align*}
    for mean zero independent random variable $\epsilon_i$. Above, $f_0$ captures the main effect of the covariates and $f_A$ captures the (heterogeneous) treatment effect. 
    \item \label{ass:summaryFxal} \textit{Summary functional is linear in $\Xouti$.}

    We assume that the summary mapping $\etaMap$ is linear in $S_i, \Xouti$.

\end{enumerate}

We comment on these two assumptions. 
Regarding \labelcref{ass:outcomeModel}, we first remark that the forms which $f_0$ and $f_A$ take can be general. For example, these can range from standard parametric models, to various nonparametric models such as tree ensembles; we will call for the treatment effect function class to be Donsker (see \labelcref{ass:donskerOutcome}). 
Second, assuming that the outcome model is additive in treatment exposure with neighbor heterogeneity is a common assumption made in the interference literature to ensure tractable causal inference and policy targeting. Such a model has been considered closely in bipartite literature \citep{PougetAbadieNeurips2019,doudchenko2020causal,kim2024optimalenvironmentalpoliciespolicy}, for causal estimation and policy learning. In the non-BNI literature, similar models have been considered \citep{ZhaoSmallErt21, Liu16, Liu19, ParkKang22, zhang2025individualizedpolicyevaluationlearning}. For example, the linear-in-means models is considered in \cite{Liu16, Liu19} and \cite{ParkKang22} and a more general heterogeneous, additive model in \cite{zhang2025individualizedpolicyevaluationlearning}. Our model can be thought of as a bipartite analog to the additive heterogeneous model in \cite{zhang2025individualizedpolicyevaluationlearning}, with additional heterogeneity captured by the interference mapping $\T$.

We contrast these models to alternatives considered in policy learning and causal modeling under interference. Generally, one would either assume an additive-in-treatment outcome model, as above, or \textit{partial interference}. Under partial interference, interference is restricted up to some finite number of neighbors (e.g. \cite{Hudgens2008,Tchetgen2010,viviano2024policy,ParkMRTP24,bargagli2025heterogeneous}). Such an assumption often involves simplifications to the treatment structure in order to model the interactions between neighbors and create estimators with valid inferential properties. Researchers will typically assume some combination of 
anonymous interference (e.g. \cite{Hudgens2008,Tchetgen2010}) and/or potentially subjective definitions of direct and spillover effects (e.g. \cite{ZiglerStatSci2021, Qu2022}) to reduce the complexity of the model. 
For causal estimation, these assumptions will allow for more general outcome models to be utilized. However, for policy targeting, anonymous interference assumptions and spillover effects limit the ability to individualize treatment, yielding policies that specify how treatment should be given to a group on average \cite{ParkMRTP24}. Interestingly, \cite{viviano2024policy} provide methods for individualized policy learning under interference using a particular experimental setup, which calls for randomly subsetting the population in a single network in order to extrapolate effects. This experimental setting, however, is not relevant to this study.

In contrast to partial interference, we require the ability to perform policy targeting to a subset of communities under BNI, with varying strengths of node-to-node interference and dependence across space from air pollution transport --- see \cref{sec:SpatialSetup} for our precise setup. Accounting for (i) long-range spillovers, (ii) location-based heterogeneity, and (iii) policy targeting understanding the effect on certain subgroups, makes our theoretical setup unsuited for partial interference, anonymity, and direct/spillover effects setups. As a result, we adopt this realistic yet expressive assumption to encompass our unique interference structure, and ensure tractable policy targeting of various subgroups.

We finally remark that our framework is not strictly dependent on \labelcref{ass:summaryFxal}. Modeling more complex functionals simply require that we can estimate said functional well in a spatially mixing setting. We leave these generalizations to future work. For the purposes of our scientific application, these working assumptions are interpretable, in line with our domain setting and have already been adopted in the literature (e.g. \cite{ParkMRTP24, kim2024optimalenvironmentalpoliciespolicy} utilize averages of neighbor covariates as inputs for interference models, among others).

\section{Fair Policy Learning under BNI}\label{sec:method}

We will now describe our methodology for fair policy learning under BNI.

In order to define our fair policy learning rigorously, we first need to quantify the benefit that a particular subgroup $s \in S$ experiences under some policy $\pi$. We do this using the so-called ``\textit{Welfare Function}'' \citep{Manski04,KitagawaTetenov18}. Let the treatment vector of intervention units excluding power plant $j$ be denoted by $\bm{A}^{(-j)}$. Further define the exposure mapping without intervention unit $j$ treated, under $\pmb{A}^{(-j)}$, by $\ei^{(-j)}(\pmb{A}^{(-j)})$, where $\ei^{(-j)}(\pmb{A}^{(-j)})=\frac{1}{J} \sum_{j' \neq j}^J H_{ij'} A_{j'}$.

\begin{definition}{\bf Welfare Function (under BNI).}
The welfare function under policy $\policy$ for subgroup $s$ is given by:
\vspace{-2em}
\begin{align*}
    W_s(\pi) &= \E[ (Y_i(\ei^{(-j)}+\Delta_{ij}) -Y_i(\ei^{(-j)})) \cdot \pi^{1:J}(A_j=1,\bm{A}^{(-j)} \mid \{ \Xn, \T \} ) \mid S_i=s] \numberthis \label{eq:ogWelfare}
\end{align*} 
\end{definition}
\vspace{-2em}
where $\pi^{1:J}$ denotes the policy function for all $J$ units and $\Delta_{ij}$, is defined as an \textit{incremental exposure}---that is, the exposure change for outcome unit $i$ when treating intervention unit $j$ vs. not. In our case, $\Delta_{ij}$ is equal to $\Tinst/J$ (as our exposure mapping $\ei$ normalizes by $J$). More generally, $\Delta$ could assume different values depending on scenario-specific definitions of the incremental exposure. In words,~$W_s(\cdot)$ measures the average difference in potential outcomes when treating intervention unit $j$ (vs. not) across all other treatment settings ($\bm{A}^{(-j)}$), among subgroup $s$. Again, note that the smaller the outcome, the better (e.g. less hospitalizations), and the same applies to the welfare function.

The objective of this paper is to develop a policy learning method that minimizes the welfare imbalance between subgroups subject to Pareto optimality (thus ensuring we do no harm). At a high level, we are interested in 
\vspace{-1.5em}
    \begin{align*}
         \argmin_{\pi\in \Pi}& \quad |W_{1}(\pi)-W_{0}(\pi)|  \mbox{    s.t.    } \pi \mbox{ is Pareto-optimal }\wedge\mbox{ } \sum_{j=1}^J \pi_j c_j \leq C.
         \vspace{-2em}
    \end{align*}
The latter constraint encodes an overall cost limit for the treatment policy. This constraint ensures the cost-effectiveness of the policy learning objective. To rigorously define this objective, we need to characterize the Pareto Frontier.

First, we start by the following simplification of the welfare function:
\vspace{-2em}
\begin{align*}
    W_s(\pi) &= \E_{\pmb{A}^{(-j)}, \Xouti, \Xintj, Y}[ \Tinst  f_A(\Xouti, \outCoeftx_0) \cdot \pi^{1:J}(A_j=1,\bm{A}^{(-j)} \mid \{ \Xn, \T \} ) \mid S_i=s] \because \ref{ass:outcomeModel} \\
    &= \E_{\Xouti, \Xintj, Y}[ \Tinst f_A(\Xouti, \outCoeftx_0)\cdot \sum_{\pmb{A}^{(-j)}}  \pi(A_j=1, \bm{A}^{(-j)} \mid \{ \Xn, \T \} ) \mid S_i=s] \\
    &=\E_{\Xouti, \Xintj, Y}[ \Tinst f_A(\Xouti, \outCoeftx_0) \cdot \pi(A_j=1  \mid \etaMapS)  \mid S_i=s] \\
    &=\E_{\Xintj, Y}[ \frac{1}{n} \sum_{i=1}^n \Tinst f_A(\Xouti, \outCoeftx_0) \cdot \pi(A_j=1  \mid \etaMapS)  \mid S_i=s] \numberthis 
\end{align*} 
Above, we have shown that with \labelcref{ass:outcomeModel}, $W_s(\pi)$ is linear in $\pi$, independent of the other treatment decisions.
With this simplified welfare function in hand, we can characterize the Pareto Frontier using a standard adaptation from \cite{Negishi1960}.

\begin{proposition}{Pareto Frontier.}\label{thm:paretoFrontNeg}
The Pareto Frontier $\Pi_0$ is given by:
\vspace{-1.5em}
    \begin{equation}
    \Pi_{0}=\{ \pi : \arg \inf_{\pi \in \Pi} \sum_{s \in \{ 0, 1 \}} \Wweight_{s} W_{s}(\pi), \pmb{\Wweight}=(\Wweight_0, \Wweight_1) \in \Delta^{2}  \}
\end{equation}
where $\Delta^l$ denotes the $l$ simplex. The proof is outlined in \cref{thm:ParetoFront}.
\end{proposition}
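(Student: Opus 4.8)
The plan is to prove the set equality by the two inclusions underlying the Negishi characterization, leveraging the fact that the preceding display has already shown each $W_s(\pi)$ to be \emph{linear} in $\pi$. Because the policy class $\Pi$ consists of $[0,1]$-valued maps and is convex (closed under mixing, i.e. randomized policies), the achievable welfare set $\mathcal{V} = \{ (W_0(\pi), W_1(\pi)) : \pi \in \Pi \} \subseteq \R^2$ is the image of a convex set under a linear map, hence convex. Throughout I would work with this value set together with its free-disposal (comprehensive) hull $\mathcal{V}^\uparrow = \mathcal{V} + \R_{\ge 0}^2$, which is also convex; recall that smaller welfare is better, so Pareto optimality here means no coordinate of the welfare vector can be decreased without increasing the other.

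First I would establish the easy inclusion: any minimizer $\pi^*$ of the scalarized objective $\sum_{s} \Wweight_s W_s(\pi)$ with $\Wweight \in \Delta^2$ lies on the Pareto frontier. Suppose not; then there is $\pi' \in \Pi$ with $W_s(\pi') \le W_s(\pi^*)$ for both $s$ and strict inequality for some $s$. Taking the $\Wweight$-weighted sum yields $\sum_s \Wweight_s W_s(\pi') < \sum_s \Wweight_s W_s(\pi^*)$ whenever the improved coordinate carries positive weight, contradicting the optimality of $\pi^*$. The boundary weights (one $\Wweight_s = 0$) deliver only weakly Pareto-optimal points, which I would treat as a limiting case, either under the weak-Pareto convention or recovered by a small perturbation of the weights into the interior.

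Then I would prove the harder converse: every Pareto-optimal $\pi^*$ is a scalarized minimizer for some $\Wweight \in \Delta^2$. Map $\pi^*$ to its welfare vector $v^* = (W_0(\pi^*), W_1(\pi^*))$, which must lie on the lower-left boundary of the convex set $\mathcal{V}^\uparrow$. By the supporting hyperplane theorem there is a nonzero $\Wweight$ with $\Wweight \cdot v^* \le \Wweight \cdot v$ for every $v \in \mathcal{V}^\uparrow$. I would then argue that $\Wweight$ can be taken coordinatewise nonnegative: if some $\Wweight_s < 0$, then for the $s$-th coordinate direction $\mathbf{e}_s$ we have $v^* + t\,\mathbf{e}_s \in \mathcal{V}^\uparrow$ for $t>0$, and this would push $\Wweight \cdot v$ strictly below $\Wweight \cdot v^*$, contradicting the support inequality. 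Normalizing $\Wweight$ onto the simplex, the support inequality restricted to $\mathcal{V}$ says precisely that $\pi^*$ minimizes $\sum_s \Wweight_s W_s(\pi)$ over $\Pi$, placing it in the claimed set.

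The main obstacle is this converse direction, and in particular the two technical points on which it rests: verifying convexity of the value set---where the linearity of $W_s$ from \labelcref{ass:outcomeModel} and the convexity of $\Pi$ are indispensable, since the characterization fails for a nonconvex frontier---and extracting a \emph{nonnegative} supporting normal rather than an arbitrary one. The free-disposal hull $\mathcal{V}^\uparrow$ is the device that makes the sign argument clean and simultaneously reconciles the weak-versus-strict Pareto distinction that otherwise complicates the boundary-weight cases of the first inclusion. I expect the remaining steps to be routine once convexity and the supporting hyperplane are in place.
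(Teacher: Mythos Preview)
Your proposal is correct and follows essentially the same Negishi-type argument as the paper: both directions match, with the easy inclusion by contradiction and the converse via a hyperplane argument exploiting convexity of the value set (which in turn rests on linearity of $W_s$ in $\pi$ from \labelcref{ass:outcomeModel}). The only cosmetic difference is that you apply the \emph{supporting} hyperplane theorem to the free-disposal hull $\mathcal{V}^\uparrow$ at the Pareto point, whereas the paper applies the \emph{separating} hyperplane theorem between the feasible set $F$ and the set $G$ of strictly dominating welfare vectors; these are equivalent formulations, and your explicit treatment of the sign of $\Wweight$ via the free-disposal directions is in fact slightly more careful than the paper's.
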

Informally, deviating from a policy on this set will not strictly improve welfare for one of the subgroups (e.g., $S=s$) without decreasing welfare on any of the other subgroups (e.g., $S=s'$), in line with the `first do no harm' principle.

This leads to the concrete objective problem of minimizing the welfare difference while being constrained to lying on the Pareto frontier:
\begin{objective}[Fair Policy Learning Under Cost Constraint]
\begin{align*}
 \min_{\pi \in \Pi} \quad & |W_1(\pi)-W_0(\pi)| \\
    \mbox{ s.t. } & \sum_{s \in \{ 0, 1 \}} \Wweight_{s} W_{s}(\pi) \leq \inf_{\pi \in \Pi} \sum_{s \in \{ 0, 1 \}} \Wweight_{s} W_{s}(\pi): \pmb{\Wweight} \in \Delta^{2} \\
    & \sum_{j=1}^J \pi_j c_j \leq C.
\end{align*}
\end{objective}

\subsection{Optimization via Quadratic Programming}

To solve this problem, we need an approximation of the Pareto frontier. To achieve that, we require an approximation to the welfare function, and a discretization of the frontier. First, let the estimator to the welfare function under policy $\policy$ and subgroup $s$ be given by
\begin{align*}
    \hat{W}_{s}(\pi) &= \frac{1}{J} \sum_{j=1}^J \pi(A_j=1 \mid \etaMap) \cdot \EstTotEff_{j}(s)
\end{align*}
where the treatment effect for treating intervention unit $j$ on outcome units with $S_i=s$ is estimated as 
$$ \EstTotEff_{j}(s)=\frac{1}{n}\sum_{i=1}^n \frac{\I{S_i=s}}{p_s}\Tinst f_A(\Xouti, \hat{\outCoeftx}) $$
Above, $p_s = \E[S_i=s]$ (which needs to be estimated) and $\hat{\outCoeftx}$ is estimated using the \al estimating equation proposed in \cite{kim2024optimalenvironmentalpoliciespolicy}, which was shown to be $\sqrt{n}$ consistently estimated provided either the baseline model $f_0$ or propensity score model $e$ is correctly specified.

Now, we consider the discretization of the grid. Suppose we have an equally-spaced grid of length $K$, defined by $\Wweight_{k,0} \in (0,1)$ for $k \in [K]$, with $\Wweight_{k,1}=1-\Wweight_{k,0}$.
Then, we define the approximate Pareto frontier as follows:
\vspace{-2em}
\begin{equation}
    \hat{\Pi}_{0} = \bigg\{ \pi \in \Pi, k \in [K]: \pi \in \arg \inf_{\pi \in \Pi} \sum_{s \in \{ 0, 1 \}} \Wweight_{k,s} \hat{W}_{s}(\pi) \bigg\}
\end{equation}
We consider these constraints up to slack $\frac{\lambda}{K}$, for $\lambda \in \R^+$.  Define 
\begin{equation}\label{eq:feasibleApproxPareto}
    \hat{\Pi}_{0}(\lambda) = \{ \pi \in \Pi: \exists k  \quad s.t. \sum_{s \in \{ 0, 1 \}} \Wweight_{k,s} \hat{W}_{s}(\pi) \leq \overline{W}_{k} + \frac{\lambda}{K} \}
\end{equation}
where $\overline{W}_{k}$ is the optimal objective value for $\pmb{\Wweight}_k$, or $\overline{W}_{k}=\inf_{\pi \in \Pi} \sum_{s \in \{ 0, 1 \}} \Wweight_{k,s} \hat{W}_{s}(\pi)$.
We now propose a quadratic program to produce fair policy estimates using the results above. To impose the Pareto optimality, we introduce the auxiliary variable $\mathbf{u}=(u_1 \dots u_K) \in \{ 0, 1 \}^K$. We will let $u_k = 1$ whenever the constraint in \cref{eq:feasibleApproxPareto} holds for $\pi$ and some $a_k$.  For shorthand, let $\pi_j = \pi(A_j=1 \mid \etaMap)$.

We then propose solving: 
\vspace{-2em}
\begin{subequations}\label{eq:optProb}
\begin{align}
 \arg \min_{\pi \in \Pi} \min_{\bm{u}} & \quad |W_1(\pi)-W_0(\pi)| \\
 \mbox{ s.t. } & u_k \sum_{j=1}^J \sum_{s \in \{ 0, 1 \}} \Wweight_{k,s} \EstTotEff_{j}(s) \pi_j \leq u_k \overline{W}_{k} + \frac{\lambda}{\sqrt{n}}: \pmb{\Wweight}_k \in \Delta^{2} \tag{A} \label{con:paretoCondition:a} \\
 & u_k \in \{ 0, 1 \}, \forall k \tag{B} \label{con:paretoMin:b} \\
 & \sum_{k=1}^K u_k \geq 1  \tag{C}  \label{con:onePareto:c} \\
 & \sum_{j=1}^J \pi_j c_j \leq C  \tag{D} \label{con:costConstraints:E}
\end{align}
\end{subequations}

\cref{con:paretoCondition:a}--\cref{con:onePareto:c} are the Pareto optimality constraints, calling for approximate Pareto optimality to be satisfied for at least one of the grid values. If we want to incorporate cost-constraints for known treatment costs $c_j$ and total budget $C$, one can impose a budget constraint in \cref{con:costConstraints:E}.

\subsection{Theoretical Results}\label{sec:theory}

We now show that the procedure achieves a $\sqrt{n}$ regret bound for $K = \sqrt{n}$. Providing theoretical guarantees is non-trivial in our setup because (i) $\etaMapS$ depends functionally on all outcome units, and (ii) concentration of $\EstTotEff_j$ is unclear under BNI.

We address (i) via \cref{lemma:welfarePopEta}, which establishes that we may consider the deterministic population limit of $\etaMap$, thereby eliminating its dependence on individual outcomes.
We use this to show (ii) via \cref{lemma:TE_Est_Concentration} demonstrating that our treatment effect estimator concentrates at a $\sqrt{n}$ rate. We then conclude our main results: our Pareto frontier is estimated well (\cref{thm:ParetoFrontEst}), the solutions on our front are correct with high probability (\cref{thm:ParetoFrontSupp}), and our final regret bound (\cref{thm:RegBd}).

Throughout, we rely on spatial mixing assumptions \labelcref{ass:samplingRegime}–\labelcref{ass:mixing}. The setup is similar to \cite{Jenish2009}, and described precisely in \cref{sec:SpatialSetup}. At a high level, these assumptions state that the spatial sampling regime grows in space with weak dependence across distant locations. 
We additionally make the following \textbf{Estimation Assumptions}.

\begin{enumerate}[label={\bfseries (Est\arabic*)}]
\tightlist
    \item \textit{Boundedness}: \label{ass:bounded}
    {Assume the outcome, estimated \& true propensity score model, estimated \& true outcome model ($Y, \hat{e}, e, \hat{\mu},\mu$) are bounded uniformly by $M < \infty$.}
    \item \textit{Donsker Treatment Effect  Function Class}: \label{ass:donskerOutcome}
    Assume $f_A \in \mathcal{F}_A$ is Donsker, or 
    \vspace{-2em}
    $$ \log \mathcal{N}(\varepsilon, \mathcal{F}_A, L^2) \lesssim \varepsilon^{-m}, \quad m < 2$$ 
    \vspace{-4em}
    \item \textit{Donsker Policy Class}: \label{ass:donskerPolicy}
    Assume $\Pi$ is Donsker, or 
    \vspace{-2em}
    $$ \log \mathcal{N}(\varepsilon, \Pi, L^2) \lesssim \varepsilon^{-p'}, \quad p' < 2$$ 
\end{enumerate}
\vspace{-1.5em}
By our outcome model assumption \labelcref{ass:outcomeModel}, we only need Donsker assumptions on $f_A$.
We now show that we can effectively remove the dependence on the $\Xouti$ in the summary functional (which is a function of all $\Xouti$). 
\begin{lemma}{Concentration of Welfare Function to population $\eta$.}\label{lemma:welfarePopEta}
    Assume boundedness \labelcref{ass:bounded}, spatial mixing \labelcref{ass:infiniteSampling}-\labelcref{ass:mixing}, $\eta$ satisfies \labelcref{ass:summaryFxal}, and \labelcref{ass:donskerPolicy}. Then, for the deterministic limit of $\etaMap$, $\eta_j \in \R^{p+1}$,
    \vspace{-1.5em}
    \begin{align*}
    \E|\frac{1}{J} \sum_{j=1}^J & \frac{1}{n}\sum_{i=1}^n & \Tinst f_A(\Xouti, \outCoeftx_0) S_i (\pi(A_j=1\mid \etaMap, \Xintj)-\pi(A_j=1 \mid \eta_j, \Xintj))| \lesssim \frac{M}{\sqrt{n}}
    \end{align*}
\end{lemma}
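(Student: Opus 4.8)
The plan is to exploit the fact that the perturbation $\pi(A_j=1\mid\etaMap,\Xintj)-\pi(A_j=1\mid\eta_j,\Xintj)$ does not depend on the outcome index $i$, so the inner $i$-average factors out as a uniformly bounded weight. Writing
\[
\bar{g}_j = \frac{1}{n}\sum_{i=1}^n \Tinst f_A(\Xouti,\outCoeftx_0) S_i, \qquad \delta_j = \pi(A_j=1\mid\etaMap,\Xintj)-\pi(A_j=1\mid\eta_j,\Xintj),
\]
the quantity of interest is $\tfrac{1}{J}\sum_j \bar{g}_j\,\delta_j$. By \labelcref{ass:bounded} together with boundedness of $\Tinst$ and $S_i\in\{0,1\}$, we have $|\bar{g}_j|\le M$ uniformly in $j$, so by the triangle inequality
\[
\E\Big|\tfrac{1}{J}\sum_{j}\bar{g}_j\delta_j\Big|\;\le\; M\,\E\Big[\tfrac{1}{J}\sum_{j}|\delta_j|\Big].
\]
It therefore suffices to show $\E\big[\tfrac1J\sum_j|\delta_j|\big]\lesssim n^{-1/2}$, which I would obtain by combining (i) continuity of the policy in its summary argument and (ii) concentration of the empirical summary $\etaMap$ to its deterministic limit $\eta_j$.

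For step (i), I would use that each policy in $\Pi$ is Lipschitz in its first argument with a common constant $L$ --- a minimal smoothness property of the policy class that accompanies the entropy bound \labelcref{ass:donskerPolicy} for standard (e.g.\ smoothly parametrized) policies --- yielding the pointwise bound $|\delta_j|\le L\,\|\etaMap-\eta_j\|$. Note that \labelcref{ass:donskerPolicy} controls the \emph{size} of $\Pi$ but not pointwise continuity of individual policies, so this modulus-of-continuity requirement is genuinely needed here. If, in the downstream use, the bound must hold uniformly over the data-dependent choice of $\pi\in\Pi$, the Donsker entropy in \labelcref{ass:donskerPolicy} can instead be invoked through a standard maximal-inequality/chaining argument to control $\sup_{\pi\in\Pi}\tfrac1J\sum_j|\delta_j|$ at the same rate.

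The crux is step (ii): bounding $\E\,\|\etaMap-\eta_j\|$. By \labelcref{ass:summaryFxal}, $\etaMap=\tfrac1n\sum_{i}\Tinst (S_i,\Xouti)$ is an $\T_j$-weighted empirical average of bounded terms, and $\eta_j$ is exactly its deterministic (population) limit. Decomposing $\etaMap-\eta_j=(\etaMap-\E\etaMap)+(\E\etaMap-\eta_j)$, the first term is a centered spatial average and the second is a deterministic approximation bias that is lower order under the growing-domain regime. Because the outcome units are spatially dependent, I cannot appeal to i.i.d.\ concentration; instead I would invoke the spatial-mixing structure \labelcref{ass:infiniteSampling}--\labelcref{ass:mixing}, following \cite{Jenish2009}. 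Under the growing-domain sampling regime with mixing coefficients decaying in distance, the covariances $\mathrm{Cov}\big(\Tinst(S_i,\Xouti),\,H_{i'j}(S_{i'},\Xout_{i'})\big)$ are summable, giving $\mathrm{Var}(\etaMap)=\bigO(1/n)$ componentwise and hence, by Cauchy--Schwarz,
\[
\E\,\|\etaMap-\eta_j\|\;\le\;\big(\E\,\|\etaMap-\eta_j\|^2\big)^{1/2}\;\lesssim\;n^{-1/2},
\]
uniformly in $j$. Averaging over $j$ and chaining back through step (i) and the boundedness reduction yields $\E\big[\tfrac1J\sum_j|\delta_j|\big]\lesssim n^{-1/2}$ and the claimed bound.

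The main obstacle I anticipate is establishing the $\bigO(1/n)$ variance bound under spatial dependence: unlike the i.i.d.\ case, one must verify that the weighted sum of covariances remains summable under the Jenish-type mixing and sampling assumptions, carefully accounting for the weights $\Tinst$ and for the fact that the number of outcome units in any bounded region grows only polynomially with the domain. A secondary technical point is justifying that the nonlinear policy preserves the $n^{-1/2}$ rate when evaluated at the random argument $\etaMap$; this is precisely where the Lipschitz regularity (or, for uniform control, the Donsker entropy) of $\Pi$ does the work.
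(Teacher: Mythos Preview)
Your proposal is correct and follows essentially the same strategy as the paper: both factor out the bounded $i$-average (your $\bar g_j$, the paper's $M^2$ bound via \labelcref{ass:bounded}) and then argue that the remaining $j$-average of policy perturbations $\delta_j$ is $\bigO_P(n^{-1/2})$. The only difference is in how that last step is justified: the paper's proof simply invokes the Donsker condition \labelcref{ass:donskerPolicy} together with boundedness to assert $\tfrac{1}{J}\sum_j[\pi(\cdot\mid\etaMap,\Xintj)-\pi(\cdot\mid\eta_j,\Xintj)]=\bigO_P(n^{-1/2})$ without spelling out the mechanism, whereas you make it concrete via Lipschitz continuity of $\pi$ in its summary argument combined with a Jenish-type spatial variance bound $\E\|\etaMap-\eta_j\|\lesssim n^{-1/2}$. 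Your observation that the entropy bound in \labelcref{ass:donskerPolicy} controls the size of $\Pi$ but not pointwise modulus of continuity is well taken; your Lipschitz route is a clean way to make the paper's terse final step rigorous.
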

\vspace{-1.5em}
The proof is found in \cref{pf:welfarePopEta}. This result relies on spatial mixing to guarantee a well-defined limit, and regularity conditions in order to conclude the convergence result.
\\
With this result in hand, we can conclude concentration of our treatment effect functions.
\begin{lemma}{Concentration of Estimated Treatment Effect Function.}\label{lemma:TE_Est_Concentration}
    Assume boundedness \labelcref{ass:bounded}, Donkser treatment effect and policy class \labelcref{ass:donskerOutcome}-\labelcref{ass:donskerPolicy}, spatial mixing \labelcref{ass:infiniteSampling}-\labelcref{ass:mixing}, and linearity of our models \labelcref{ass:outcomeModel}-\labelcref{ass:summaryFxal}. Then,
    \vspace{-1.5em}
    $$ \E \sup_{\outCoeftx} |\frac{1}{n} \sum_{i=1}^n \Tinst f_A(\Xouti, \outCoeftx_0) - \E[\Tinst f_A(\Xouti, \outCoeftx_0)]| \lesssim \frac{M}{\sqrt{n}}$$
\end{lemma}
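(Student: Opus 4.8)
The plan is to recast the claim as a uniform concentration bound for a centered spatial average indexed by the treatment-effect parameter (equivalently, by the Donsker class $\mathcal{F}_A$), and then to invoke a maximal inequality valid under weak spatial dependence rather than independence. Write $g_{\outCoeftx}(\Xouti, \T_i) = \Tinst f_A(\Xouti, \outCoeftx)$, so the quantity of interest is $\E\sup_{\outCoeftx}|(\E_n - \E)g_{\outCoeftx}|$, with $\E_n$ the spatial average over the $n$ outcome units. The target rate $M/\sqrt{n}$ is precisely the fluctuation one expects from $\sqrt{n}(\E_n-\E)$ once the index class is shown to have a finite uniform entropy integral and the spatial dependence is shown to be weak enough that the effective sample size is $\asymp n$.

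First I would control the complexity of the index class. By boundedness \labelcref{ass:bounded} together with boundedness of the interference weights $\Tinst$, the class $\mathcal{G}=\{g_{\outCoeftx}\}$ has envelope $\lesssim M$. Multiplying a class by a fixed, uniformly bounded measurable function leaves its $L^2$-covering numbers of the same order, so \labelcref{ass:donskerOutcome} transfers to $\mathcal{G}$: $\log\mathcal{N}(\eps,\mathcal{G},L^2)\lesssim \eps^{-m}$ with $m<2$. Consequently the Dudley entropy integral $\int_0^{cM}\sqrt{\log\mathcal{N}(\eps,\mathcal{G},L^2)}\,d\eps$ converges, since $\sqrt{\log\mathcal{N}}\lesssim \eps^{-m/2}$ is integrable at the origin exactly when $m<2$. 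This is the step where the Donsker assumption on $f_A$ (and, as noted after \labelcref{ass:summaryFxal}, only on $f_A$) does its work; \labelcref{ass:outcomeModel} is what reduces the relevant class to $\mathcal{F}_A$ in the first place.

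The crux is replacing the classical independence-based maximal inequality by one that tolerates the spatial dependence of \labelcref{ass:infiniteSampling}--\labelcref{ass:mixing}. I would proceed by a blocking and coupling argument in the spirit of \cite{Jenish2009}: partition the expanding sampling region into blocks whose side length grows slowly with $n$, use the mixing coefficients together with a Berbee- or Bradley-type coupling to replace the block sums by independent surrogates, and then apply symmetrization and chaining (using the finite entropy integral from the previous step) to the coupled, nearly independent blocks. The covariance summability built into the mixing assumption guarantees that the variance of the normalized spatial sum is $\asymp 1/n$, so the leading chaining term is $\lesssim M/\sqrt{n}$, while the coupling error is of strictly smaller order provided the block size and mixing rate are balanced appropriately.

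I expect the dependent maximal inequality of the previous paragraph to be the main obstacle. The off-the-shelf van der Vaart--Wellner-type bounds are stated for i.i.d.\ data, and their proofs lean on Rademacher symmetrization, which is not directly available for dependent fields. The delicate points are (i) controlling the chaining increments by moment inequalities for mixing random fields (a Rosenthal- or covariance-summation bound) rather than by Hoeffding-type bounds, and (ii) tuning the block size so that the coupling error induced by the mixing coefficients is negligible relative to $1/\sqrt{n}$ while the blocks remain large enough that the variance scaling is preserved. Once this inequality is in place, combining it with the finite entropy integral and the $\lesssim M$ envelope yields the stated bound directly.
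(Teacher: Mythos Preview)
Your proposal is correct and follows essentially the same route as the paper: spatial blocking, a Berbee-type coupling to independent block surrogates (the paper invokes Proposition~1 of \cite{KuznetsovMohri17} for this step), and then symmetrization together with Dudley's entropy integral on the independent blocks. Your anticipated obstacle~(i) largely dissolves under this scheme: once the blocks have been coupled to independent copies, ordinary Rademacher symmetrization and bounded-difference concentration (the paper uses McDiarmid) apply directly, so no separate Rosenthal-type moment inequality for mixing fields is needed.
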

\vspace{-1.5em}
The proof is found in \cref{pf:TE_Est_Concentration}. This result relies on spatial mixing to guarantee a well-defined limit, and regularity conditions in order to conclude the convergence result. In particular, the blocking technique described in \citep{Bernstein1927, KuznetsovMohri17} is applied to spatial blocks, upon which standard empirical process tools are used to conclude our rates. In contrast to existing literature (e.g., standard concentration bounds), this result provides concentration bounds for empirical sums of Donsker classes in spatial settings, with arbitrary dependence through functionals.

We now use this to show the guarantees on our fair policy learning procedure.  
\begin{theorem}{Pareto Frontier Estimation.}\label{thm:ParetoFrontEst}
    Assume boundedness \labelcref{ass:bounded}, Donkser Treatment Effect and Policy class \labelcref{ass:donskerOutcome}-\labelcref{ass:donskerPolicy}, spatial mixing \labelcref{ass:infiniteSampling}-\labelcref{ass:mixing}, and linearity of our models \labelcref{ass:outcomeModel}-\labelcref{ass:summaryFxal}. Also, let $K = \sqrt{n}$. Then,
    \vspace{-1.5em}
    $$\E [\sup_{\pmb{\Wweight}, \pi} |\sum_{s \in \{ 0, 1 \}} \Wweight_s W_s(\pi) - \inf_{\pmb{\Wweight}_k} \{ \sum_{s \in \{ 0, 1 \}} \Wweight_{ks} \hat{W}_s(\pi) \}|] \lesssim \frac{M+\lambda}{\sqrt{n}} $$
\end{theorem}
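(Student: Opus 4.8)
The plan is to control the bracketed quantity pointwise in $(\pmb{\Wweight},\pi)$ by a triangle-inequality decomposition that isolates the \emph{estimation error} (true versus estimated welfare at a common weight) from the \emph{discretization error} (a continuous weight versus its nearest grid point), and finally to absorb the optimization slack. Fix $\pmb{\Wweight}\in\Delta^2$ and $\pi\in\Pi$, and let $\pmb{\Wweight}_{k^\ast}$ denote the grid point closest to $\pmb{\Wweight}$, so that $\|\pmb{\Wweight}-\pmb{\Wweight}_{k^\ast}\|_1\lesssim 1/K$. Writing $\hat{V}(\pi)=\inf_{\pmb{\Wweight}_k}\sum_s\Wweight_{ks}\hat{W}_s(\pi)$ and inserting intermediate terms yields
\begin{align*}
\Big|\sum_s\Wweight_s W_s(\pi)-\hat{V}(\pi)\Big|
\;\le\;
\Big|\sum_s\Wweight_s\big(W_s(\pi)-\hat{W}_s(\pi)\big)\Big|
\;+\;\Big|\sum_s(\Wweight_s-\Wweight_{k^\ast s})\hat{W}_s(\pi)\Big|
\;+\;\Big(\sum_s\Wweight_{k^\ast s}\hat{W}_s(\pi)-\hat{V}(\pi)\Big),
\end{align*}
and I would show each of the three terms is $\bigO((M+\lambda)/\sqrt{n})$ uniformly.

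The second (discretization) term is routine: since $\sum_s\Wweight_{ks}=1$ and $\hat{W}_s$ is bounded by $M$ under \labelcref{ass:bounded}, it is at most $\|\pmb{\Wweight}-\pmb{\Wweight}_{k^\ast}\|_1\max_s|\hat{W}_s(\pi)|\lesssim M/K=M/\sqrt{n}$ by the choice $K=\sqrt{n}$. The third term measures the gap between the grid point matched to $\pmb{\Wweight}$ and the minimizing grid point; interpreting the infimum as the grid approximation of the target weight (consistent with the matched pairing $\pmb{\Wweight}\mapsto\pmb{\Wweight}_{k^\ast}$), the remaining discrepancy is exactly the slack $\lambda/K=\lambda/\sqrt{n}$ with which the approximate frontier \cref{eq:feasibleApproxPareto} and the program \cref{eq:optProb} are defined, and is where the $\lambda$ contribution enters.

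The first (estimation) term is the crux, and I would handle it uniformly in $\pi$ and $\pmb{\Wweight}$. Because the weights lie in the simplex, $|\sum_s\Wweight_s(W_s-\hat{W}_s)|\le\max_s|W_s(\pi)-\hat{W}_s(\pi)|$, so the supremum over $\pmb{\Wweight}$ is free and it suffices to bound $\E\sup_{\pi\in\Pi}|W_s(\pi)-\hat{W}_s(\pi)|$. I would split this into three pieces: (a) replacing the random summary $\etaMap$ in the policy argument of $\hat{W}_s$ by its deterministic population limit $\eta_j$, which is exactly \cref{lemma:welfarePopEta} and costs $\lesssim M/\sqrt{n}$; (b) replacing the plug-ins $\hat{\outCoeftx}$ and $\hat{p}_s$ by the truths $\outCoeftx_0,p_s$, propagated through the $\sqrt{n}$-consistency of the A-learning estimator together with boundedness and the regularity of $f_A\in\mathcal{F}_A$, giving $\bigO_p(1/\sqrt{n})$; and (c) replacing the empirical averages $\frac1n\sum_i$ and $\frac1J\sum_j$ by their population expectations, delivered by \cref{lemma:TE_Est_Concentration} at rate $M/\sqrt{n}$. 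Summing (a)--(c) with the discretization and slack terms gives the claimed $\bigO((M+\lambda)/\sqrt{n})$.

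The main obstacle is making piece (c) \emph{uniform} over the policy class simultaneously with the treatment-effect class under spatial dependence: one must control an empirical process indexed by the product class $\{(\Xintj,\eta_j)\mapsto \Tinst f_A(\Xouti,\outCoeftx)\,\pi(A_j=1\mid\eta_j,\Xintj)\}$, combining the Donsker entropy bounds \labelcref{ass:donskerOutcome}--\labelcref{ass:donskerPolicy} with the spatial blocking and maximal-inequality argument underlying \cref{lemma:TE_Est_Concentration}. The delicate step is the cross term produced when both the treatment-effect estimate and the policy argument fluctuate; I expect to handle it by a peeling argument together with boundedness \labelcref{ass:bounded}, using that the product of two uniformly bounded Donsker classes is again Donsker so that the blocking bound applies to the product process. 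Propagating the plug-in error of $\hat{\outCoeftx}$ and $\hat{p}_s$ uniformly through $f_A$ (piece (b)) is the other point requiring care, which I would control via Lipschitz/equicontinuity of $f_A$ in $\outCoeftx$ over $\mathcal{F}_A$.
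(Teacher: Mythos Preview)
Your proposal is essentially the paper's own proof: the same three-way triangle-inequality split into an estimation term, a discretization term bounded by $M/K=M/\sqrt{n}$, and the $\lambda/\sqrt{n}$ slack, with the estimation term handled by the same chain of (a) replacing $\etaMap$ by its limit $\eta_j$ via \cref{lemma:welfarePopEta}, (b) propagating the plug-in error in $\hat{\outCoeftx},\hat{p}_s$, and (c) concentrating via \cref{lemma:TE_Est_Concentration}.

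One simplification relative to what you sketch for piece (c): you anticipate needing a product-Donsker/peeling argument to get uniformity over $\Pi\times\mathcal{F}_A$ jointly under spatial dependence. The paper sidesteps this entirely by adding and subtracting $\E[\EstTotEff_j(s)]\,\pi(A_j{=}1\mid\eta_j,\Xintj)$, which yields
\[
\Big|\tfrac{1}{J}\sum_j \pi_j\big[\EstTotEff_j(s)-\E\EstTotEff_j(s)\big]\Big|
\;+\;\Big|\E[\EstTotEff_j(s)]\cdot\tfrac{1}{J}\sum_j\big[\pi_j-\E\pi_j\big]\Big|.
\]
In the first piece one simply bounds $|\pi_j|\le 1$ and invokes \cref{lemma:TE_Est_Concentration}; in the second, $\E[\EstTotEff_j(s)]$ is a bounded constant and the remaining process is indexed by $\Pi$ alone, so \labelcref{ass:donskerPolicy} suffices. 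This decouples the two function classes and makes the product-class machinery unnecessary.
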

\vspace{-1.5em}
The proof is found in \cref{pf:ParetoFrontEst}. The object of interest is bounded with the triangle inequality using concentration of the treatment effect function and regularity of the policy class to obtain $\sqrt{n}$ rates. This result effectively extends Theorem 4.1 of \cite{viviano2024fair} to the BNI setting.

With this finite-sample bound, we can conclude with high probability that our estimated set of Pareto front solutions yield the solutions on the Pareto frontier with appropriately chosen grid size and slack:
\begin{theorem}{Pareto Frontier Support.}\label{thm:ParetoFrontSupp}
    Assume the conditions of \cref{thm:ParetoFrontEst}, and recall that $K = \sqrt{n}$. Then, for any $\gamma \in (0, 1)$, with $\lambda \leq \frac{\gamma}{CM}$ for universal constant $C$, we have
    $ \mathbb{P}(\Pi_0 \subseteq \hat{\Pi}_0(\lambda)) \geq 1 - \gamma$
\end{theorem}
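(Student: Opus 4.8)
The plan is to fix an arbitrary $\pi^\star \in \Pi_0$ and exhibit a single grid index at which $\pi^\star$ satisfies the feasibility constraint defining $\hat{\Pi}_0(\lambda)$, then argue that the enabling event is common to all of $\Pi_0$ and occurs with probability at least $1-\gamma$. Throughout, write $G_{\Wweight}(\pi) = \sum_{s} \Wweight_s W_s(\pi)$ for the true weighted welfare, $g_k(\pi) = \sum_s \Wweight_{k,s}\hat{W}_s(\pi)$ for its estimated counterpart at grid point $k$, and let $D_n = \sup_{\Wweight, \pi}\, |G_{\Wweight}(\pi) - \inf_k g_k(\pi)|$ be the uniform deviation that \cref{thm:ParetoFrontEst} controls in expectation by $\lesssim (M+\lambda)/\sqrt{n}$. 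Since $\pi^\star \in \Pi_0$, the Negishi characterization in \cref{thm:paretoFrontNeg} supplies a weight $\Wweight \in \Delta^2$ for which $\pi^\star$ minimizes $G_{\Wweight}$, and the goal reduces to producing a $k$ with $g_k(\pi^\star) \le \overline{W}_k + \lambda/\sqrt{n}$.

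The heart of the argument is a deterministic two-sided sandwich built from $D_n$. Let $k^\star$ attain $\inf_k g_k(\pi^\star)$. For the upper bound, optimality of $\pi^\star$ for $G_{\Wweight}$ together with the definition of $D_n$ gives $g_{k^\star}(\pi^\star) = \inf_k g_k(\pi^\star) \le G_{\Wweight}(\pi^\star) + D_n = \inf_{\pi} G_{\Wweight}(\pi) + D_n$. For the matching lower bound on the grid optimum, I apply the uniform deviation to every competing policy: for all $\pi$ one has $g_{k^\star}(\pi) \ge \inf_k g_k(\pi) \ge G_{\Wweight}(\pi) - D_n \ge \inf_{\pi'} G_{\Wweight}(\pi') - D_n$, and taking the infimum over $\pi$ yields $\overline{W}_{k^\star} \ge \inf_{\pi} G_{\Wweight}(\pi) - D_n$. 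Subtracting the two estimates gives $g_{k^\star}(\pi^\star) - \overline{W}_{k^\star} \le 2D_n$, so on the event $\{D_n \le \lambda/(2\sqrt{n})\}$ the constraint in \cref{eq:feasibleApproxPareto} holds at $k^\star$ and hence $\pi^\star \in \hat{\Pi}_0(\lambda)$. Crucially, $D_n$ is a single random variable that does not depend on the choice of $\pi^\star$, so this one event simultaneously places every element of $\Pi_0$ inside $\hat{\Pi}_0(\lambda)$; no union bound over the (possibly infinite) frontier is required.

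It then remains to pass from expectation to high probability. Markov's inequality converts the expectation bound of \cref{thm:ParetoFrontEst} into a tail bound for $D_n$ at the threshold $\lambda/(2\sqrt{n})$, whereupon the $\sqrt{n}$ factors cancel and the bad-event probability is governed by $\E[D_n]$ relative to $\lambda$; calibrating the universal constant $C$ and the scale of $\lambda$ against $\gamma$ as in the statement drives this probability below $\gamma$ and yields $\mathbb{P}(\Pi_0 \subseteq \hat{\Pi}_0(\lambda)) \ge 1-\gamma$. I expect the main obstacle to be the lower bound $\overline{W}_{k^\star} \ge \inf_{\pi} G_{\Wweight}(\pi) - D_n$ on the \emph{estimated} grid optimum: this is exactly the point where a pointwise concentration statement would fail, because the estimated minimizer attaining $\overline{W}_{k^\star}$ is data-dependent, so one must invoke the full uniform-over-$\pi$ strength of \cref{thm:ParetoFrontEst} rather than control at a fixed policy. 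A secondary delicacy is that Markov's inequality is lossy, so obtaining the clean relation between $\lambda$ and $\gamma$ through the universal constant $C$ requires tracking the constant in the expectation bound carefully rather than treating it as a mere order statement.
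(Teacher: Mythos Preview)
Your proposal is correct and follows essentially the same route as the paper: reduce the inclusion $\Pi_0 \subseteq \hat{\Pi}_0(\lambda)$ to a deterministic deviation controlled by the uniform quantity bounded in \cref{thm:ParetoFrontEst}, then convert that expectation bound into a high-probability statement via Markov's inequality. Your explicit two-sided sandwich at the index $k^\star = \arg\min_k g_k(\pi^\star)$ is in fact a cleaner articulation of the reduction than the paper's terser pivot around $|\overline{W}_{\Wweight} - \overline{W}_{k,n}|$, but the ingredients and logic coincide.
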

The proof is found in \cref{pf:ParetoFrontSupp}. This result follows by applying Markov's inequality and using  \cref{thm:ParetoFrontEst} to further bound the probability.

\begin{theorem}{Regret Bound.}\label{thm:RegBd}
Assume the setup of \cref{thm:ParetoFrontEst}. Then,
\vspace{-1.5em}
\begin{align*}
        \E |W_1(\hat{\pi}_\lambda)-W_0(\hat{\pi}_\lambda)| - & \inf_{\pi \in \Pi_0} |W_1(\pi)-W_0(\pi)|  | \lesssim \frac{M}{\sqrt{n}} 
\end{align*}
\end{theorem}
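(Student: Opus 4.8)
The plan is to prove the bound by the standard three-term decomposition of policy-learning regret, specialized to the Pareto-constrained fairness objective and leaning on the two preceding theorems. Write $g(\pi) = |W_1(\pi) - W_0(\pi)|$ for the true fairness objective and $\hat{g}(\pi) = |\hat{W}_1(\pi) - \hat{W}_0(\pi)|$ for its plug-in estimate (the objective actually minimized by the program, with each $W_s$ replaced by $\hat{W}_s$), and let $\pi^\star \in \arg\inf_{\pi \in \Pi_0} g(\pi)$ be an oracle-optimal fair policy on the true frontier. Since $\hat{\pi}_\lambda$ solves \cref{eq:optProb}, it minimizes $\hat{g}$ over the feasible set $\hat{\Pi}_0(\lambda)$ of \cref{eq:feasibleApproxPareto}. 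I would then decompose
$$ g(\hat{\pi}_\lambda) - g(\pi^\star) = [g(\hat{\pi}_\lambda) - \hat{g}(\hat{\pi}_\lambda)] + [\hat{g}(\hat{\pi}_\lambda) - \hat{g}(\pi^\star)] + [\hat{g}(\pi^\star) - g(\pi^\star)] $$
and bound each term separately.

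First I would control the two estimation-error terms by a uniform concentration bound on the fairness objective over the policy class. By the reverse triangle inequality, $|g(\pi) - \hat{g}(\pi)| \le \sum_{s} |W_s(\pi) - \hat{W}_s(\pi)|$, so it suffices to bound $\sup_{\pi \in \Pi}|W_s(\pi) - \hat{W}_s(\pi)|$ in expectation. This is exactly what follows from combining \cref{lemma:welfarePopEta} (which removes the dependence on the random summary functional $\etaMap$ by passing to its deterministic population limit $\eta_j$) with \cref{lemma:TE_Est_Concentration} (the $\sqrt{n}$-rate concentration of the treatment-effect estimator built from $\hat{\outCoeftx}$), together with the Donsker control of $\Pi$ from \labelcref{ass:donskerPolicy}; indeed, \cref{thm:ParetoFrontEst} already packages this uniform error at rate $(M+\lambda)/\sqrt{n}$. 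Hence the first and third terms are each $\lesssim (M+\lambda)/\sqrt{n}$.

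Second, for the optimization term $\hat{g}(\hat{\pi}_\lambda) - \hat{g}(\pi^\star)$, I would invoke feasibility of the oracle: by \cref{thm:ParetoFrontSupp}, on an event of probability at least $1-\gamma$ we have $\Pi_0 \subseteq \hat{\Pi}_0(\lambda)$, so $\pi^\star$ is feasible for the program and the minimizing property of $\hat{\pi}_\lambda$ forces $\hat{g}(\hat{\pi}_\lambda) \le \hat{g}(\pi^\star)$, making this term nonpositive. Thus on the good event the excess fairness is $\lesssim (M+\lambda)/\sqrt{n}$. To upgrade to expectation I would split over the good event and its complement: on the complement (probability $\le \gamma$), boundedness \labelcref{ass:bounded} caps $g(\hat{\pi}_\lambda) - g(\pi^\star)$ by a universal $\bigO(M)$ constant, contributing $\lesssim M\gamma$. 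Choosing $\gamma \asymp n^{-1/2}$ with $\lambda$ constant (consistent with the slack $\lambda/\sqrt{n}$ in \cref{eq:optProb} and, up to the absorbed constant, the requirement $\lambda \le \gamma/(CM)$ of \cref{thm:ParetoFrontSupp}) delivers the stated $M/\sqrt{n}$ rate.

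The step I expect to be the main obstacle is reconciling the slack–probability trade-off in the expectation conversion together with verifying that the enlargement to $\hat{\Pi}_0(\lambda)$ does no harm. \cref{thm:ParetoFrontSupp} buys the feasibility event at probability $1-\gamma$ only by shrinking the admissible slack $\lambda \le \gamma/(CM)$, while \cref{thm:ParetoFrontEst} degrades with larger $\lambda$; threading these so that the good-event estimation error and the bad-event boundedness contribution simultaneously come in at $\bigO(M/\sqrt{n})$ is delicate. The genuinely Pareto-specific subtlety is confirming that minimizing $\hat{g}$ over the \emph{enlarged} region cannot let $\hat{\pi}_\lambda$ drift to a policy whose true objective is far below the frontier in a way that breaks the comparison — this is precisely where the $\sqrt{n}$-uniform control from \cref{lemma:TE_Est_Concentration} and \cref{thm:ParetoFrontEst} ties the true and estimated objectives together, so that the slack only perturbs $g(\hat{\pi}_\lambda)$ by $\bigO(\lambda/\sqrt{n})$. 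Everything else reduces to the concentration machinery and Donsker entropy bounds already established.
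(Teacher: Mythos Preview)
Your proposal is correct and follows essentially the same route as the paper: both split the expectation over the event $\{\Pi_0 \subseteq \hat{\Pi}_0(\lambda)\}$ via \cref{thm:ParetoFrontSupp} with $\gamma \asymp n^{-1/2}$, use boundedness on the complement, and on the good event reduce to the uniform welfare-concentration bound from \cref{thm:ParetoFrontEst}. Your three-term decomposition is the standard explicit form of what the paper does in two steps (adding and subtracting $\widehat{\Imb}(\hat{\pi}_\lambda)$, then enlarging the sup to all of $\Pi$); the feasibility argument for the middle term is identical, and your closing remark about the $\lambda$--$\gamma$ tension is a fair observation that the paper's own proof leaves implicit.
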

\vspace{-1.5em}
The proof is found in \cref{pf:RegBd}. The argument follows by first bounding the welfare deviation on the full policy class by the welfare deviation on the pareto frontier, which is guaranteed with high probability from the previous result (\cref{thm:ParetoFrontSupp}). Then, estimation guarantees on the welfare functions \cref{thm:ParetoFrontEst} allow us to conclude the result. In summary, we extend the fair policy targeting result under the Welfare Disparity measure to high-dependence BNI settings.

\section{Empirically-based Monte Carlo Simulations}\label{sec:simulations}
In this section, we carry out empirically-based Monte Carlo simulations to evaluate the proposed methodology \citep{knaus2021machine}. 

\subsection{Simulation Setup}

The setup is as follows: we choose simulation parameters to ensure the proportion of treated units in the simulated data is within 0.01 of the empirical treatment rate in the real data and to ensure that the simulated mortality rate is within 0.01 of the empirical mortality rate in our data. Then, we run 1,000 simulations, generating new treatments and outcomes in each, under a signal-to-noise ratio of 3. We maintain the $J=459$ power plants and $n=35,036$ ZIP codes, with each subgroup comprising about half of the total ZIP codes ($n_0=17,517, n_1=17,519$). We then run our method and a competing policy learning method on the simulated data and evaluate the learned policies on the following metrics, averaging across the 1000 simulations (full simulation details and parameters are in \cref{sec:sim_details}):
\begin{itemize}
\tightlist
    \item Welfare for Group 0, $W_0(\pi)$;
    \itemsep0em
    \item Welfare for Group 1, $W_1(\pi)$;
    \item Disparity in the policy defined by:
$\Imb(\pi)=|W_0(\pi)-W_1(\pi)|$.
\end{itemize}
Since we assume smaller outcomes are better, a smaller welfare indicates a more desirable result. For $\Imb$, smaller values are preferable.

We compare our fair policy method to existing constrained welfare maximization methods \citep{Manski04, KitagawaTetenov18}. We note that, practically speaking, we are in fact implementing welfare \textit{minimization} since smaller outcomes are preferred in our context. However, for consistency with literature, we continue to refer to this methodology as welfare maximization. Welfare maximization seeks to maximize welfare of a target population.
Under a utilitarian perspective, this means optimizing a weighted combination of the welfares, for $\Wweight \in [0, 1]$: $\Wweight  W_0(\pi) + (1-\Wweight ) W_1(\pi) $.

We compare the performance of the fair policy to the welfare maximization approach as we vary (i) the maximum permitted \Imb while holding the other constraints fixed, and (ii) the budget while holding the other constraints fixed. The purpose of these comparisons is to examine how the welfares are prioritized and how \Imb consequently scales.

The histogram for the ground-truth individual total effects, by subgroup and overall, are shown in \cref{fig:TE_Sim}. The subgroup sizes are equal in this simulation study. In subgroup 0, the majority of scrubber installations are protective. Specifically, $\E [\TotEff_j(0)] = -261.6$ with 99.6\% of intervention units benefiting this subgroup. On the other hand, in subgroup 1, the majority of scrubber installations are harmful. Specifically, $\E [\TotEff_j(1)] = 105.6$ with 3.9\% of intervention units benefiting this subgroup. Overall, the benefits to group 0 are larger in magnitude than the harms to group 1, yielding overall more protective effects: $\E [\TotEff_j] = -156.0$ with 93.7\% of treatments benefiting the full population. Thus, the fair policy learner must balance benefit to group 0 with potential harm to group 1, minimizing \Imb. 

\begin{figure}[H]
\center
   \includegraphics[width=.85\linewidth]{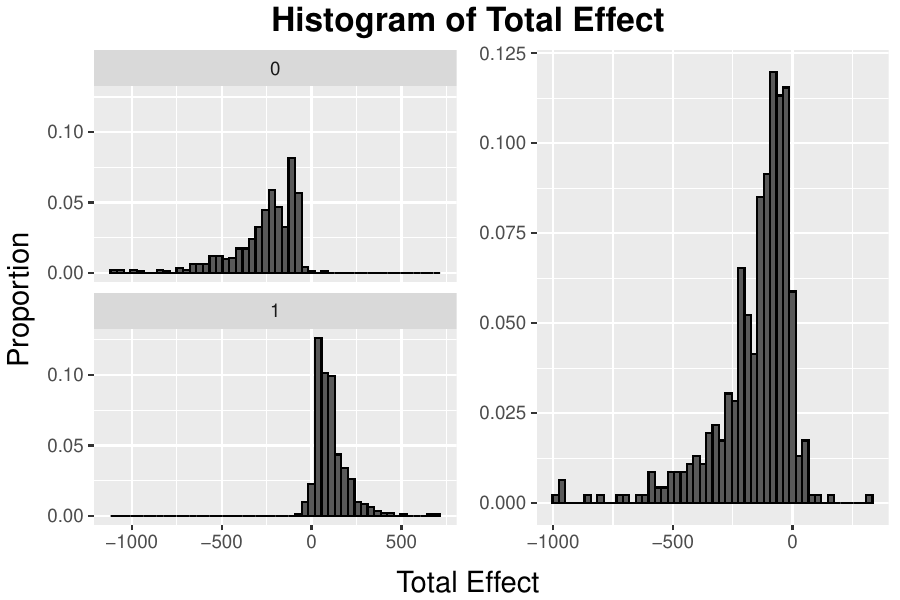}
   \caption{\small Total effect histogram for our simulation setup. These histograms depict the ground-truth individual total effects, by subgroup (left plots) and overall (right plot). The subgroup sizes are equal in this simulation study.\label{fig:TE_Sim}} 
\end{figure}

\subsection{Simulation Results}

The simulation results are shown in \cref{fig:SimResSc1}. As shown in the left column of the plots, the welfare maximization (shown in blue triangles), which prioritizes decreases in welfare in the population overall, leads to treating more units since this benefits group 0 much more and thus the overall population more. However, this yields worse outcomes for subgroup 1 and thus higher \Imb. On the other hand, the fair policy learner (red dots) maintains the welfare at a more consistent level of disparity at the cost of lower total welfare gains, even with greater cost budgets. 

This behavior is observed both for the simulation scenario on the top row, in which we change the maximum permitted \Imb while holding the other constraints fixed, and in the scenario on the bottom row, where we change the budget while holding the other constraints fixed. In particular, we note that the level of \Imb (right column plots) increases steadily for the welfare maximization model both when the admissible disparity increases, but also in the case where larger budgets (\% budget) allow for the treatment of a larger pool of units.

Overall, the simulation results support that the fair policy learner induces lower disparity compared to the welfare maximization method, at the cost of lower total welfare gains. Such a tradeoff is apparent also in the non-interference context \citep{viviano2024policy}. Our simulation study further reveals the challenge of fair policy learning in a BNI setting: we cannot choose to treat only one of the subgroups; the intervention unit will affect both subgroups and we must tradeoff benefits in such a setting while ensuring fairness.


\begin{figure*}[t!]
\centering
\begin{subfigure}[t]{0.49\textwidth}
    \includegraphics[width=\linewidth]{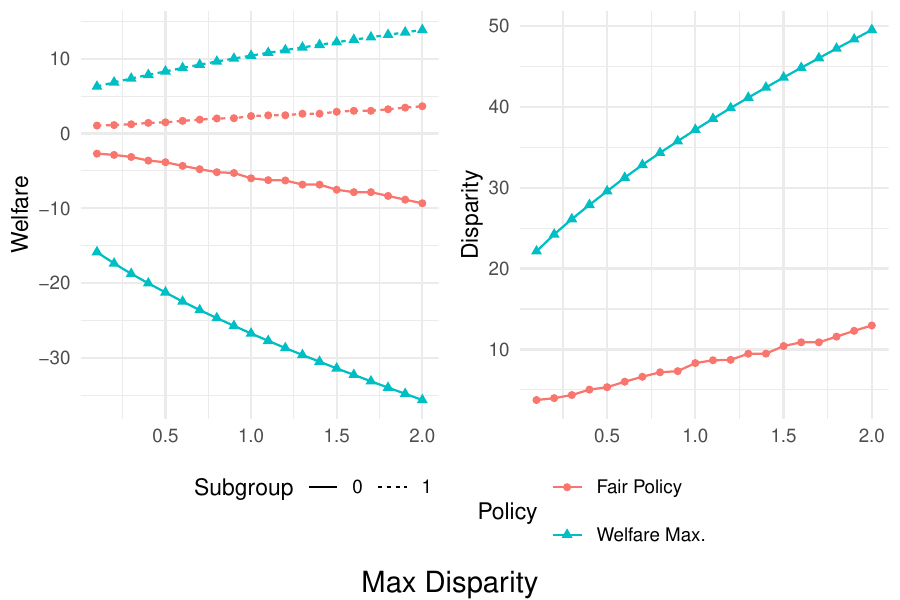}
    \label{fig:consWelfare}
\end{subfigure}
\hfill
\begin{subfigure}[t]{0.49\textwidth}
    \includegraphics[width=\linewidth]{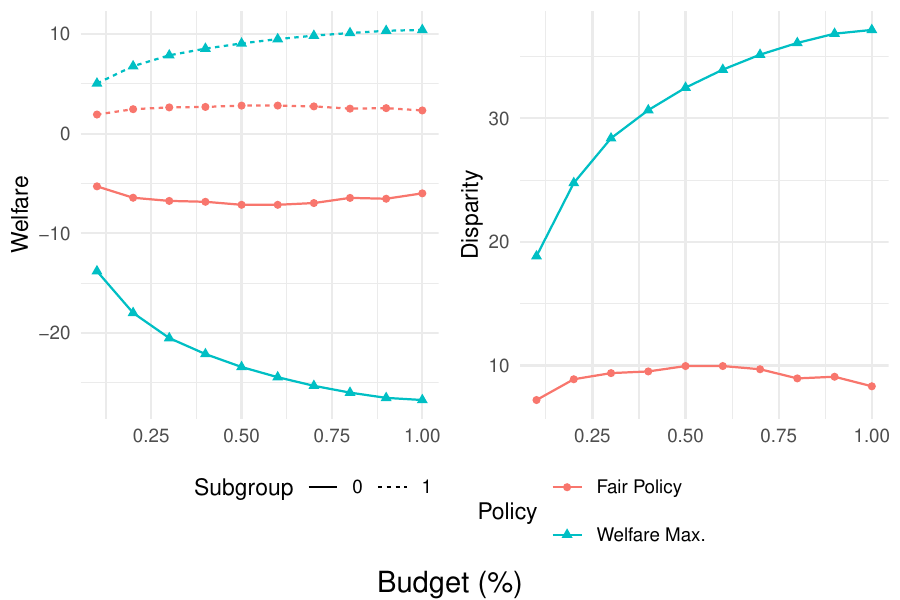}
    \label{fig:consDisparity}
\end{subfigure}    
\caption{\small Two different simulation scenarios: We compare the welfare performance (where lower values correspond to ``better'' outcomes in our case) of the fair policy (blue triangles) to the welfare maximization approach (red dots) as we vary i) the maximum permitted \Imb while holding the other constraints fixed (top row), and (ii) the budget while holding the other constraints fixed (bottom row). Plots in the left column depict results disaggregated by subgroups. Plots in the right column depict aggregate results. \label{fig:SimResSc1}}
\end{figure*}

\section{Fair, Cost-Effective Environmental Policy Analysis}\label{sec:rwd}

We apply the fair policy learning method to our motivating applied problem of learning power plant scrubber installation policies that are fair and cost-effective across socioeconomic subgroups under cost constraints. We begin with our data description.

\subsection{Data Description}\label{sec:data}

Our data consist of three parts: (i) outcome level data with Medicare beneficiary information, (ii) intervention level data with power plant characteristics and scrubber cost information, and (iii) the interference map or characterization of air pollution transport effects between outcome units and intervention units via a reduced complexity atmospheric model known as \textit{HYSPLIT Average Dispersion} (HyADS) \citep{Henneman2019}. We detail each component.

\subsubsection{Outcome Level Data: Medicare Mortality Rates and Covariates}

The mortality outcomes for our analyses are derived from \nBeneficiaries Medicare enrollee records for all Medicare beneficiaries residing in $n=35,036$ ZIP codes across the contiguous U.S. in 2005. Using these records, we compute ZIP code-level 2005 mortality rates, defined as the number of deaths per person-year among these Medicare beneficiaries, which accounts for different amounts of person-time of exposure across ZIP codes.

Several covariates were also measured at the ZIP code level. In particular, U.S. Census socioeconomic and demographic features (from the 2000 decennial census), meteorological \citep{Kalnay1996}, and smoking rate \citep{DwyerLindgren2014} covariate data were obtained. The complete list of ZIP code covariates used in our analyses, and corresponding descriptive statistics, are given in \cref{tab:zip_covs}. We perform log transformations of several covariates (\TotHeatInp, \TotPop, and \TotOp), in order to reduce skew in their distribution, and standardization of covariates.

\subsubsection{Intervention Level Data: Power Plant Data and Interference Map}

Data on scrubber status in the year 2005 and plant characteristics for $J=459$ coal–fired power plants concentrated in the Eastern U.S., which serve as our intervention units of interest, were obtained from the U.S. Environmental Protection Agency (EPA) Air Markets Program Database. We also collect several covariates at the power plant level, which are depicted, along with descriptive statistics, in \cref{tab:pp_covs}. Scrubber installation cost information for power plants that had a scrubber installed was obtained from the Energy Information Administration (EIA) website and was used to build a predictive model to estimate costs of scrubber installation at each power plant without a scrubber, as described in \cite{kim2024optimalenvironmentalpoliciespolicy}. Under this cost model, the estimated total cost for installing a scrubber at all of the 459 power plants is approximately \totalCost. 

To characterize the interference structure or the bipartite adjacency matrix, we utilize the HyADS model. HyADS is a pollution transport model that yields a unit-less metric quantifying the amount of emissions from an individual power plant that were transported (e.g., by wind) to a particular ZIP code \citep{Henneman2019}. HyADS is based on highly accurate meteorological measures from the HYSPLIT model by the National Oceanic and Atmospheric Administration (NOAA) Air Resources Laboratory. These values, which were calculated for all power plant pairs and ZIP codes in the data, form the elements of $\mathbf{H}$.

\subsection{Subgroup Definition}

We construct policies that are fair across subgroups defined based on poverty levels. \dual is defined at the ZIP code level as the percent of Medicare enrollees in a given ZIP code that are also eligible for Medicaid (these individuals are sometimes referred to as ``dual eligible''). The distribution for \dual is shown in \cref{fig:S_Distro_Dual}, with the 0th, 25th, 50th, 75th, and 100th percentiles marked in black dashed lines. 

The histogram shows a long right tail in the poverty levels, with the majority of ZIP codes concentrated at lower poverty levels and a small concentration of ZIP codes at higher poverty levels. In order to capture the group of people living in poverty, we carry out fair policy analysis using subgroups defined as groups of ZIP codes above/below the 75th percentile of \dual in our sample.
We refer to the group above the 75th percentile as \dualHigh and the group below the 75th percentile by \dualLow.

Results from alternative definitions based on the 25th and 50th percentile of the poverty subgroup are shown in Supplementary Material Section \ref{sec:SensDiffSPercs} 
as sensitivity analyses.

\subsection{Estimation Results for Optimal Policies}

We apply the doubly-robust A-learning estimation method from \cite{kim2024optimalenvironmentalpoliciespolicy} to estimate our treatment effects $\TotEff_j, \TotEff_j(0), \TotEff_j(1)$. We show the corresponding breakdown of these quantities in \cref{fig:Hist_TEs_PctPoor}. We find that the vast majority of scrubber installations are protective ($99\%$ of scrubber installations), with the \dualHigh experiencing lower benefits compared to the \dualLow. 
\cref{fig:mapAndHistTEdual} shows the concentration of plants with estimated high health gains from treatment lie in the Midwest and Eastern regions of the U.S.

\subsection{Fair Policy Analysis}

We perform fair policy learning under two policy-making perspectives. To ground our discussion, we will refer to the existing scrubber allocation in 2005 as the \textit{factual} policy.

\begin{enumerate}
\tightlist
    \item \textbf{Clean slate policy analysis}: We perform policy learning as if no plants were treated yet, or from a `clean-slate'. It would not be practical to implement a clean slate policy in our context, as reversing existing treatments and starting from a clean slate would likely be infeasible and potentially unethical. Thus, the purpose of the clean slate analysis is primarily to provide insights into the potential inefficiencies and unfairness of the factual policy and demonstrate improvements offered by fair policy learning.
    \item \textbf{Augmentation policy analysis}: We perform policy learning beginning from the factual policy. This policy augments the factual policy, so we refer to this as the `augmentation' policy analysis. The augmentation policy provides an actionable strategy for expansion upon the factual policy and could inform future policy actions.
\end{enumerate}

Throughout our analyses, we compare the following policy learning methods: (i) fair policy learner, (ii) welfare maximization approach, an adaptation of \cite{KitagawaTetenov18}, (iii) optimal policy learner from \cite{kim2024optimalenvironmentalpoliciespolicy}, (iv) factual policy. Since, to our knowledge, there are no specified policy budget constraints for scrubbers in the U.S., we examine the performance of each policy learner across a range of budget constraints defined as proportions of the universal cost for installing a scrubber at every power plant in the U.S. We examine $(c_n)_{n=1}^{10} \text{ where } c_n = 0.1n$, and additionally include 0.12 since the factual policy satisfies this constraint.

\subsubsection{Clean slate and augmentation policy}

We now detail the results of the \textit{clean slate} and \textit{augmentation policy} analyses in \cref{fig:main_dual}. First, we see that both the clean slate and augmentation policy Analysis benefit the \dualLow (solid lines) more than the \dualHigh (dashed lines). This is in line with \cref{fig:Hist_TEs_PctPoor}. 

Next, we find that the welfare maximization (blue triangles) and the optimal policy (green crosses) have greater welfare gains compared to the fair policy learner (red circles). These policies, however, induce higher disparity compared to the fair policy learner. 

We also note that the factual policy (black square) is higher in the plot compared to the welfare maximization and optimal policy, revealing a relative lack of efficiency in overall welfare benefits compared to optimal decision making policies (the welfare maximizer and optimal policy learner), and lack of fairness compared to the fair policy learner.

Interestingly, under a clean slate policy analysis, no fair, Pareto-optimal policy is learned at 10\% of the cost budget. A higher budget (beginning at 12\%) is required to obtain a fair policy under the clean slate analysis. In contrast, the augmentation policy fails to identify any policy beyond the factual within the 10–20\% budget range. Only at 30\% can an augmented fair policy be found. This indicates that the factual policy exhibits persistent \Imb, and achieving fairness requires a larger budget.



\begin{figure*}[t!]
\centering
\begin{subfigure}[t]{0.49\textwidth}
    \includegraphics[width=\linewidth]{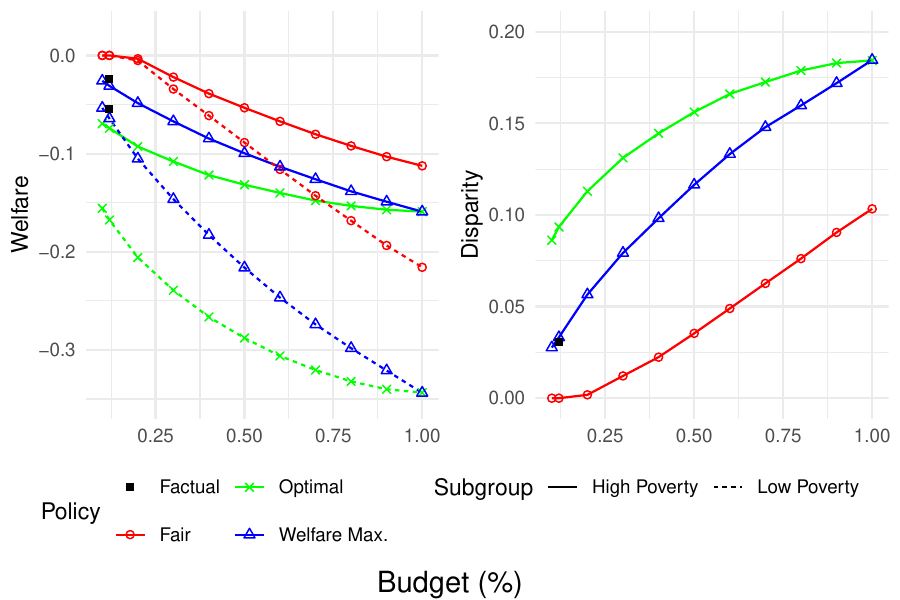}
    \caption{Clean slate policy\label{fig:clean_dual}} 
\end{subfigure}
\hfill
\begin{subfigure}[t]{0.49\textwidth}
    \includegraphics[width=\linewidth]{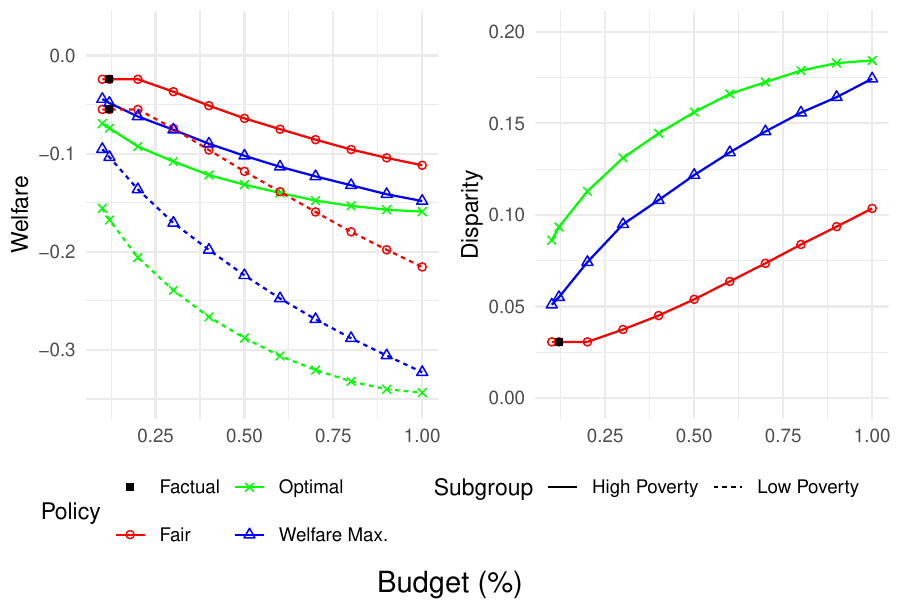}
    \caption{Augmentation policy\label{fig:aug_dual}}
\end{subfigure}    
\caption{Comparison of Clean slate and augmentation policy, across various cost budgets for \dual.}
\label{fig:main_dual}
\end{figure*}

\textbf{Visualization of Optimal and Fair Policy.} 
We investigate the results of the fair policy learner closer with visualizations of the learned policies.
In \cref{fig:panelFair} and \cref{fig:panelOptimal}, we show the fair policy and the optimal policy on the U.S. map, across budgets 10\%-90\%. The plots show that the optimal policy prioritizes certain `focal' plants in the Southeast and Central U.S. (dark circles), while the fair policy will not prioritize any one of these focal plants although the policy still concentrates around this region. This supports our findings from above in which plants that benefit the most \textit{do not} mean the most equitable, and in order to reach fairness, we must treat less.
To highlight the differences and policy concentration, we zoom in on the policies at 50\% budget, in \cref{fig:map_opt_fair_50}.

\begin{figure*}[t!]
\centering
\begin{subfigure}[t]{0.9\textwidth}
    \includegraphics[width=\linewidth]{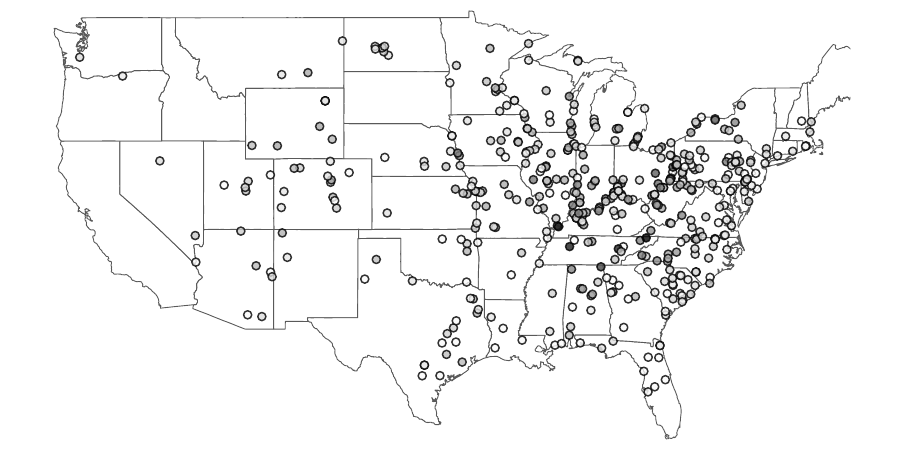}
    \caption{Optimal Policy} 
\end{subfigure}
\hfill
\begin{subfigure}[t]{0.9\textwidth}
    \includegraphics[width=\linewidth]{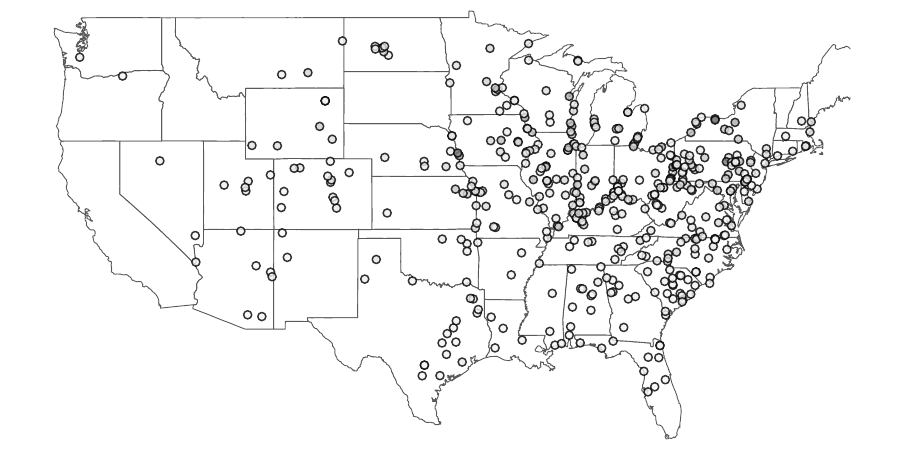}
    \caption{Fair Policy}
\end{subfigure}    
\caption{Comparison of the Optimal and Fair Policy at 50\% budget, for \dual.\label{fig:map_opt_fair_50}}
\end{figure*}

\subsection{Minimum Welfare Constraint}

In our primary analysis, we find that while the fair policy has low \Imb, the welfare benefit to subgroups is lower than what it could be under an optimal or welfare maximization approach. Namely, the benefit to the \dualLow group is decreased under our fair policy learner. 

In this section, we run our fair policy analyses first constraining the minimum welfare for \dualLow to be at least that of the welfare maximization approach, and then we minimize \Imb after this subject to whatever budget constraint. 

We carry out this analysis in \cref{fig:minWelfare_dual} (purple triangles), constraining the \dualLow benefit to be at least as much as that of the Welfare Maximization welfare under 20\% budget (top panel) and 50\% budget (bottom panel).

If we impose the fairness constraint that the minimum welfare received by \dualLow must be at least as high as under the welfare maximization policy (achieved at a 20\% budget), then attaining a fair policy requires increasing the budget to at least 30\%. However, this comes at the cost of a larger induced \Imb compared to the fair policy without this constraint. At higher budget levels, the minimum welfare analysis converges to that of the unconstrained setting.

Similarly, if we target the minimum welfare to be at least as high as the welfare maximization approach at 50\% budget, we require at least 60\% budget to attain a fair policy. This effect is more pronounced than under the analysis above, at the Welfare Maximization performance under 20\% budget; \Imb increases substantially—both relative to the unconstrained fair policy learner and relative to the case where the 20\% budget performance is matched.




\vspace{-1cm}
\begin{figure*}[t!]
\centering
\begin{subfigure}[t]{0.48\textwidth}
    \includegraphics[width=\linewidth]{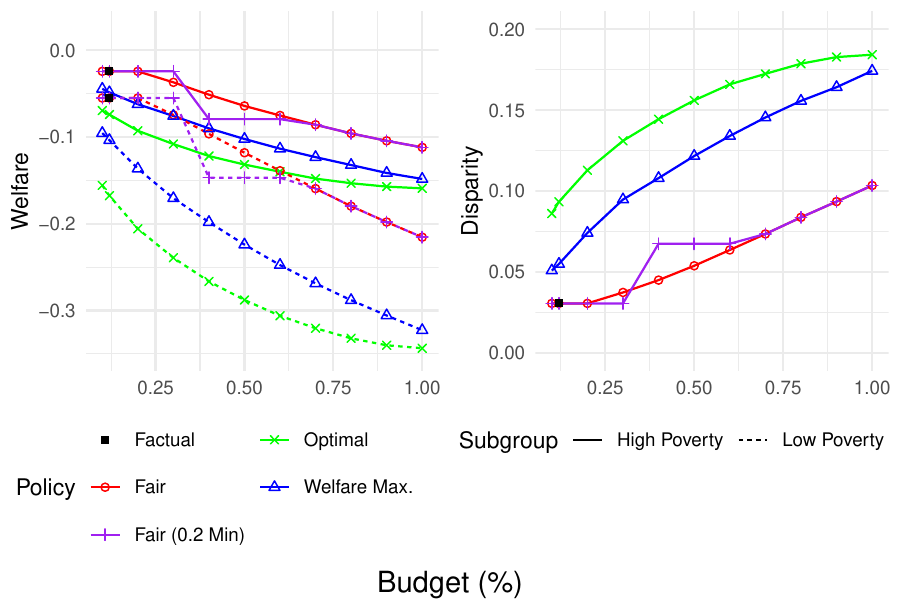}
    \caption{20\% Minimum Constraint\label{fig:minWelfare20_dual}} 
\end{subfigure}
\hfill
\begin{subfigure}[t]{0.48\textwidth}
    \includegraphics[width=\linewidth]{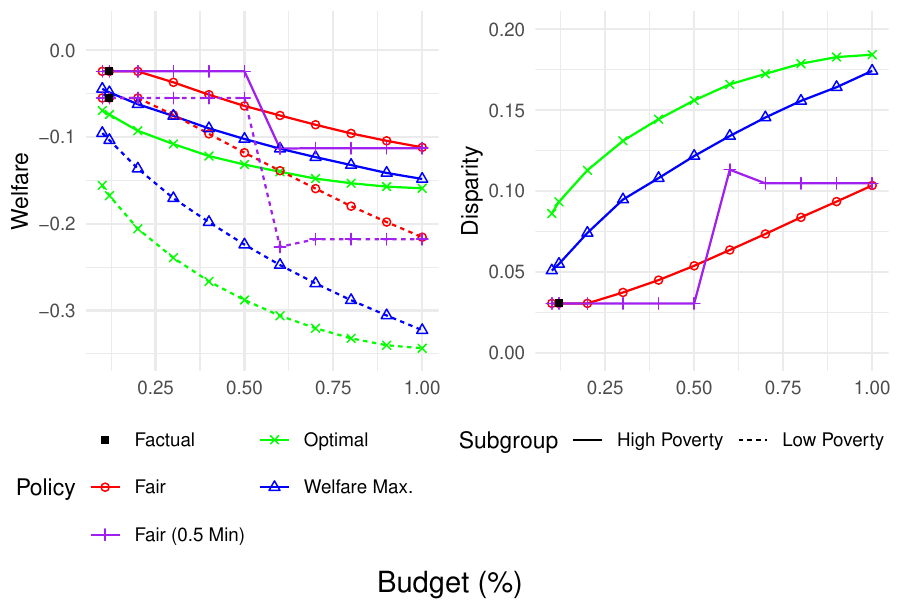}
    \caption{50\% Minimum Constraint\label{fig:minWelfare50_dual}}
\end{subfigure}    
\caption{Comparison of Clean Slate and Augmentation Policy, across various cost budgets for \dual.\label{fig:minWelfare_dual}}
\end{figure*}

\section{Discussion}\label{sec:conclusions}
In this paper, we studied how to assign interventions under BNI to maximize welfare balance while guaranteeing that policies are in line with the principle of ``first, do no harm''. This is of great importance in public health and social science contexts, where interventions are known to disproportionately burden high-risk communities, which can in turn can lead to economic impacts. Balancing such considerations with cost constraints not only benefits the population equitably but also could provide long term economic benefits. 

In contrast to existing interference literature, we design our methods for a setting with BNI spillover effects, without relying on the partial interference assumptions that are often invoked to simplify causal inference in these settings. Our approach also accounts for another unique feature of the BNI context: namely, the fact that treating an intervention unit inherently leads to changes in outcomes on both subgroups, which complicates fair policy considerations compared to settings with conventional data structures. We prove regret bounds, validate our theoretical results in simulation studies, and apply our methods to study the tradeoffs in learning optimal and fair policies for the installation of emissions reducing technologies on coal-fired power plants. 

Fundamentally, our analyses have supported that \textit{fairness comes at a price}. We must pay higher cost or reduce our overall welfare in order to achieve fairness. The contrast between optimality and fairness highlights the many considerations policymakers must balance when designing environmental policies.

While in this work we focus on environmental policies, our methods applicability is not limited to this endeavor. In fact, the proposed method can be extensively applied to any real-world data setting with a similar BNI data structure in order to answer general questions on fair, cost-effective policy design. In healthcare, such BNI structures naturally arise: hospital readmission programs involve spillovers between facilities and patient communities \citep{zuckerman2016readmissions, baicker2013oregon}; mental health interventions link treatment centers with catchment areas creating possible community spillovers \citep{baicker2013oregon}; opioid treatment programs connect facilities with counties, generating spillovers through drug supply and peer networks \citep{pacula2018supply}. Beyond healthcare, in marketplace settings, one may model the causal effects of online advertising campaigns from businesses on consumers. Here, businesses and consumers are distinct node types and high spillover may occur depending on the advertising strategy \citep{doudchenko2020causal}. Across these domains, policymakers must navigate trade-offs between efficiency and equity while accounting for BNI which is precisely the challenge our framework addresses.

Our work has several limitations that could be addressed in future work. First, we rely on some strong assumptions about the outcome model form, and we assume that intervention costs are fixed and known. Future work could consider accommodating unknown treatment costs and general model forms. Second, we have not conducted inference relating to the fair policies. To increase utility to policymakers, considering statistical uncertainty (inference) along the frontier and a greater understanding of optimality-fairness gaps is also an interesting future direction \citep{LiuMolinari24,auerbach2025testingfairnessaccuracyimprovabilityalgorithms}.

\textbf{Data Availability. }
Patient health information is protected as specified in the Data Use Agreement between the Center for Medicare \& Medicaid Services (CMS) and our institution, which additionally determined: (1) waiver of HIPAA authorization, (2) waiver/alteration of the consent process, since the use of Medicare data does not require obtaining consent from individual Medicare beneficiaries. Due to the sensitive nature of this data, it is not available for public use. The power plant data and ZIP code level confounding data are publicly available as noted in the text, and simulation code will be made available at the following Github repository \href{https://github.com/NSAPH-Projects/emissions-ihd-bni_pareto}{\texttt{https://github.com/NSAPH-Projects/emissions-ihd-bni\_pareto}}.

{ \singlespacing
\bibliography{refs}
}

\appendix


\renewcommand\Authfont{\fontsize{12}{14.4}\selectfont} 
\renewcommand\Affilfont{\fontsize{11}{10.8}\itshape} 

\crefname{section}{Appendix}{Appendices}

\counterwithin{equation}{section}
\counterwithin{figure}{section}
\counterwithin{table}{section}
\pagenumbering{arabic}
\setcounter{page}{1}

\renewcommand{\thesection}{Appendix~\Alph{section}}
\appendix 

\if1\anon
{
\begin{center}
    \textbf{\large Supplementary Materials to\\ ``Fair Policy Learning under Bipartite Network Interference: Learning Fair and Cost-Effective Environmental Policies''}\\ \vspace{0.25cm}
    \normalsize Raphael C. Kim$^1$, Rachel C. Nethery$^1$, Kevin L. Chen$^1$, and Falco J. Bargagli-Stoffi$^{2,1}$ \\
    {\small $^1$Department of Biostatistics, Harvard T.H. Chan School of Public Health, 677 Huntington Ave, 02215, MA, USA}\\
    {\small $^2$Department of Biostatistics, UCLA Fielding School of Public Health, 650 Charles E Young Dr S, 90095, CA, USA}
\end{center}
} \fi

\if0\anon
{
  \bigskip
  \bigskip
  \bigskip
  \begin{center}
    \textbf{\large Supplementary Materials to\\ ``Fair Policy Learning under Bipartite Network Interference: Learning Fair and Cost-Effective Environmental Policies''}\\ \vspace{0.25cm}
\end{center}
  \medskip
} \fi


\section{Proofs} \label{sec:proofs}

\subsection{Spatial Setup and Mixing Assumptions}\label{sec:SpatialSetup}

    We rely on mixing assumptions to conclude regret bounds for our policy.
    We formalize our spatial setup as in \cite{Jenish2009}. Let $D_n \subseteq D \subseteq \mathbb{R}^2$ for $D_n$ a finite subset. $\Xouti$ lie in $D_n$. Define $L(i)$ as the location function; $L(i)$ returns the 2d coordinate location of unit $i$ in $D$.
    Take metric $\rho(i, i') =\max_{1 \leq k \leq 2} |L(i)_k - L(i')_k|$, which induces a norm $|i| = \max_{1 \leq k \leq 2} |L(i)_k|$, for $L(i)_k$ denoting the $k$-th component of the coordinate location of unit $i$. The distance between two subsets $U,V \subset D$ is defined as $\rho(U,V) = \inf \{\rho(i, i') : i \in U, i' \in V \}$, and the cardinality of a finite subset $U$ of $D$ is denoted $|U|$. Further define the spatial mixing coefficient as follows (overloading notation on $\beta$):

        For $U \subset D_n, V \subset D_n$, define $\sigma_n(U)=\sigma(\Xouti: i \in U)$, the sigma field over $U$, and $\beta_n(U,V)=\beta(\sigma_n(U), \sigma_n(V))=\frac{1}{2} \sup \sum_{U'_s} \sum_{V'_{r}} |\pr(U'_s \cap V'_r)-\pr(U'_s)\pr(V'_r)|$ where $\{ U'_s \}, \{ V'_r\}$ are finite partitions of $U$ and $V$ respectively. Then, the $\beta$ mixing coefficient on $D_n$ is given by:
        $$\beta_{s,s'}(r)=\sup\{ \beta_n(U,V): |U| \leq s, |V| \leq s', \rho(U,V) \geq r \}$$

        Also define $\overline{\beta}_{s,s'}(r)=\sup_n \beta_{s,s',n}(r)$ to account for sampling region variability.

    We assume the following.\\
    \textbf{Spatial Assumptions}
    \begin{enumerate}[label=\bfseries (Sp\arabic*)]
        \item \label{ass:samplingRegime} \textit{Infinite Sampling Regime}: \label{ass:infiniteSampling} $D$ is infinitely countable and all elements are at least $\rho_0 \geq 1$ apart for all $i, i' \in D$.
        \item \label{ass:mixing} \textit{Spatial Mixing}: For $\tau > 2$,
        $r \overline{\beta}_{s,s'}(r) \leq (s+s') \cdot r^{-\tau}$
    \end{enumerate}    

    These assumptions are standard in spatial statistics literature \citep{Jenish2009}, roughly saying that we continue to observe outcome units growing in space, and the farther two units are from one another, the weaker the dependence is. We will need the technique of blocking, first introduced in \cite{Bernstein1927}. The idea is to partition our space into blocks and consider `every other block' to induce rough independence between blocks with some space apart. To that end, we define the partition as follows.

    Divide $D_n$ into blocks with area $d_n^2$ yielding $L_n$ blocks. For $z \in \{ 0, 1 \}^2$, let $type(i)$ be whether or not sample $i$ lies in some block $l$ corresponding to `type z', or spatial locations with block $l$ having the 1st coordinate being even/odd (0/1) and the second coordinate being even/odd (0/1). This permits us to employ the independent blocking from Bernstein on $\mathcal{B}_{z,l}=\{ \Xouti \in \mathcal{B}_{l} \mbox{ and }type(i)=z, \forall l  \}$, or the data samples that are of type $z$ and lie in $\mathcal{B}_l$. The details of how to use this construction are outlined in the results below.
    
\subsection{Proof of \texorpdfstring{\cref{thm:paretoFrontNeg}}{Proposition}}
\begin{proof}\label{thm:ParetoFront}
    The proof follows in a similar manner to Lemma 2.1 of \cite{viviano2024fair}, which deals with a maximization problem instead of minimization. The proof for the minimization follows similarly. Define $$ \tilde{\Pi}_{0}= \{ \pi : \arg \sup_{\pi \in \Pi} \sum_{s \in [|S|]} \Wweight_s W_{s}(\pi), \pmb{\Wweight} \in \mathbb{R}^{|S|}_+  \}  $$

    We wish to show that $\tilde{\Pi}_0=\Pi_0$.
    \noindent $ \tilde{\Pi}_{0} \subseteq \Pi_{0} $: This follows as $\tilde{\Pi}_{0}$ consists of all policies that minimize the welfare for each subgroup value $s \in S$. Deviations from a policy on this set would not improve a particular subgroup without harming another by definition.
    \\\\
    $ \Pi_{0} \subseteq \tilde{\Pi}_{0}  $: Let $F= \{ \pmb{x} \in \mathbb{R}^{|S|}_+ : \exists \pi \in \Pi, x_s \leq W_{s}(\pi) \}$. $F$ is convex because $W_{s}$ is concave by linearity in $\pi$, and $F$ is the hypograph of $W_{s}$. Further $F$ is non-empty since $\mathbf{0} \in F$.
    Define $G= \{ d_s + W_{s}(\pi_{0}): \pmb{d} \in \mathbb{R}^{|S|}_+, \pi_{0} \in \Pi_{0} \}$, the welfare values that strictly dominate $\pi_{0}$. $G$ is also non-empty and convex. 

    By definition of a policy $\pi_{0}$ that is Pareto-optimal, $F \cap G = \emptyset$. We can then apply the separating hyperplane theorem to conclude that 
    $\exists \pmb{\Wweight} \in \mathbb{R}^{|S|}$ such that for $x \in F, d \in \mathbb{R}^{|S|}_+$, $\sum_{s \in [|S|]} \Wweight_s x_s \leq \sum_{s \in [|S|]} \Wweight_s W_{s}(\pi_0)+d_s$. 
    Normalizing by $\pmb{\Wweight}$, this implies that for $\pmb{\Wweight} \in \Delta^{|S|}$, $\sum_{s \in [|S|]} \Wweight_s W_{s}(\pi) \leq \sum_{s \in [|S|]} \Wweight_s W_{s}(\pi_{0})$, for any $\pi \in \Pi$ yielding the result. 
\end{proof}

\subsection{Helper Lemmas}
In this section, we establish some helper lemmas on concentration bounds.

\begin{lemma}{Convergence of the Summary Functional.}\label{lemma:etaCv}

    Assume boundedness \labelcref{ass:bounded}, spatial mixing \labelcref{ass:infiniteSampling}-\labelcref{ass:mixing}, and $\eta$ satisfies \labelcref{ass:summaryFxal}. Then, for some deterministic $\eta_j \in \R^{p+1}$,
    $$ \etaMap \xrightarrow{p} \eta_j$$
\end{lemma}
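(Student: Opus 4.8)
The plan is to recognize $\etaMap$ as a deterministically weighted empirical average of bounded, spatially dependent random vectors and to establish its convergence via a second-moment (Chebyshev) argument. By the linearity assumption \labelcref{ass:summaryFxal} we may write
$$\etaMap = \frac{1}{n}\sum_{i=1}^n \Tinst\,(S_i, \Xouti),$$
where the weights $\Tinst$ (the entries of the single column $\T_j$) are bounded uniformly by $M$ under \labelcref{ass:bounded} and the randomness resides in the bounded vectors $(S_i,\Xouti)\in\R^{p+1}$. Since convergence in probability of a vector is equivalent to coordinatewise convergence, it suffices to treat a generic bounded scalar coordinate $V_i$ of $(S_i,\Xouti)$.

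First I would fix the centering. Under \labelcref{ass:bounded} the sequence $\mu_n := \frac{1}{n}\sum_{i=1}^n \Tinst\,\E[V_i]$ is bounded, and I take $\eta_j$ (that coordinate of it) to be its limit along the sampling sequence; in the triangular-array formulation of \cite{Jenish2009} one equivalently centers at $\mu_n$ and verifies the array stabilizes. It then remains to show that the random average concentrates around $\mu_n$, i.e. that its variance vanishes.

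The main step is the variance bound
$$\mathrm{Var}\!\left(\frac{1}{n}\sum_{i=1}^n \Tinst V_i\right) = \frac{1}{n^2}\sum_{i=1}^n\sum_{i'=1}^n \Tinst H_{i'j}\,\mathrm{Cov}(V_i, V_{i'}).$$
Because $V_i,V_{i'}$ are bounded by $M$ and the field is $\beta$-mixing, the covariance inequality for mixing variables (using $\alpha\le\beta$) gives $|\mathrm{Cov}(V_i,V_{i'})|\lesssim M^2\,\overline{\beta}_{1,1}(\rho(i,i'))$, and with $|\Tinst|\le M$ the problem reduces to bounding $\frac{1}{n^2}\sum_{i,i'}\overline{\beta}_{1,1}(\rho(i,i'))$. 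Here the spatial geometry enters: by the minimal-separation/infinite-sampling assumption \labelcref{ass:infiniteSampling}, the number of units in an annulus at distance $\approx r$ of any fixed unit grows like $r$ in $\R^2$, so
$$\sum_{i'}\overline{\beta}_{1,1}(\rho(i,i')) \lesssim \sum_{r\ge 1} r\,\overline{\beta}_{1,1}(r).$$
The mixing rate \labelcref{ass:mixing}, namely $r\,\overline{\beta}_{1,1}(r)\le 2r^{-\tau}$, then yields $\sum_{r\ge1} r\,\overline{\beta}_{1,1}(r)\lesssim\sum_{r\ge1} r^{-\tau}<\infty$ since $\tau>2$. Hence the inner sum is bounded uniformly in $i$, the double sum is $\bigO(n)$, and the variance is $\bigO(n^{-1})$.

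Finally I would conclude by Chebyshev: since $\E[\frac{1}{n}\sum_i \Tinst V_i]=\mu_n\to\eta_j$ and the variance is $\bigO(n^{-1})\to 0$, each coordinate satisfies $\frac{1}{n}\sum_i \Tinst V_i \ip \eta_j$, and stacking the $p+1$ coordinates gives $\etaMap \ip \eta_j$. The main obstacle is precisely the variance control: marrying the polynomial growth of spatial point counts from \labelcref{ass:infiniteSampling} with the mixing decay \labelcref{ass:mixing} so that the covariance series converges, where boundedness \labelcref{ass:bounded} of both the weights $\Tinst$ and the covariates is what converts the $\beta$-mixing coefficients into usable covariance bounds. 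The linearity reduction and the Chebyshev step are otherwise routine.
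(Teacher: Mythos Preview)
Your argument is correct. The paper's own proof is a single sentence invoking Theorem~3 of \cite{Jenish2009} (a weak law of large numbers for $\alpha$-mixing random fields on irregular lattices), whereas you have effectively reproved that WLLN from scratch via the second-moment route: covariance inequality for bounded mixing variables, annulus counting in $\R^2$ under the minimum-separation assumption \labelcref{ass:infiniteSampling}, and summability of $r\,\overline{\beta}_{1,1}(r)$ from the rate in \labelcref{ass:mixing}. What your approach buys is transparency---each assumption is visibly consumed at a specific step---and self-containment; what the citation buys is brevity and a result stated at the right level of generality (triangular arrays, non-identical marginals). One point neither treatment fully resolves: the existence of the deterministic limit $\eta_j$ requires the mean sequence $\mu_n=\frac{1}{n}\sum_i \Tinst\,\E[V_i]$ itself to converge, not merely the centered average to vanish; you flag this and defer to the triangular-array formulation, which is exactly the level of rigor the paper operates at here.
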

\begin{proof}
    This follows directly from Theorem 3 of \cite{Jenish2009}.
    
\end{proof}

\subsection{Proof of \texorpdfstring{\cref{lemma:welfarePopEta}}{Lemma}}\label{pf:welfarePopEta}

\begin{proof}
    Let $\eta_j$ be the deterministic limit of $\etaMap$ as in \cref{lemma:etaCv}. Then, 
    \begin{align*}
        \hat{W}_s(\pi) &= \frac{1}{n}\sum_{i=1}^n \frac{\I{S_i=s}}{p_s} f_A(\Xouti, \outCoeftx_0) \frac{1}{J} \sum_{j=1}^J \Tinst \pi(A_j=1\mid \etaMap, \Xintj) \\
        &= \frac{1}{n}\sum_{i=1}^n \frac{\I{S_i=s}}{p_s} f_A(\Xouti, \outCoeftx_0) \frac{1}{J} \sum_{j=1}^J \Tinst [\pi(A_j=1\mid \etaMap, \Xintj) -  \pi(A_j=1\mid \eta_j, \Xintj)] \\
        & \quad\quad\quad + \frac{\I{S_i=s}}{p_s} f_A(\Xouti, \outCoeftx_0) \frac{1}{J} \sum_{j=1}^J \Tinst \pi(A_j=1\mid \eta_j, \Xintj)
    \end{align*}

    By the Donsker Policy class \labelcref{ass:donskerPolicy} and boundedness of our interference map and outcome \labelcref{ass:bounded}, we have
    \begin{align*}
        \sup_{\pi} |\frac{1}{n}\sum_{i=1}^n & \frac{\I{S_i=s}}{p_s} f_A(\Xouti, \outCoeftx_0) \frac{1}{J} \sum_{j=1}^J \Tinst [\pi(A_j=1\mid \etaMap, \Xintj) -  \pi(A_j=1\mid \eta_j, \Xintj)]| \\
        & \leq |\frac{1}{J} \sum_{j=1}^J M^2 [\pi(A_j=1\mid \etaMap, \Xintj) -  \pi(A_j=1\mid \eta_j, \Xintj)]|\\
        & = \bigO_P(n^{-1/2})
    \end{align*} 

    Hence, the result follows.
\end{proof}

\subsection{Proof of \texorpdfstring{\cref{lemma:TE_Est_Concentration}}{Lemma}}\label{pf:TE_Est_Concentration}

\begin{proof}
    We show this using the blocking technique described in \cref{sec:SpatialSetup}. We then bound each block rademacher complexity in terms of the spatial region, and conclude rates accordingly. This follows in a similar manner to \cite{kim2024optimalenvironmentalpoliciespolicy}.
    \begin{enumerate}
        \item  \textbf{Deviation Bound on Independent Blocks via Spatial $\beta$-Mixing rate.}
            Let $\B_{z,l}'$ be independent copies of $\B_{z,l}$, and $\B_z$ be the blocks $\B_{z',l}$ with $z=z'$ (and $\B_z'$ similarly).
            By Proposition 1 of \cite{KuznetsovMohri17}, we have for an $M-$bounded function $h$ and fixed $z$,
            \begin{align*}
                \E[h(\B_{z}')-h(\B_{z})] \leq L_n \cdot |D_n| \sup_{q,q'}\overline{\beta}_{q,q'}(d_n) 
            \end{align*}
             Take $h$ to be the block averages for block type $z$, or $h(\B_{z})=\E[\frac{1}{|\B_{z}|}\sum_{l} \sum_{i \in \B_{z,l}} \Tinst (f_A(\Xouti, \outCoeftx) -\E  [  f_A(\Xouti, \outCoeftx))]]$. Then it follows from the union bound that
            \begin{align*}
                \pr[\frac{1}{n} \sum_{i =1}^n \Tinst &  (f_A(\Xouti, \outCoeftx)  - \E [f_A(\Xouti, \outCoeftx)] > \delta] \\
                & \leq \sum_z \pr[h(\B_z')-\E h(\B_z') > \delta - \E h(\B_z')] + L_n \cdot \sup_{s,s'}\overline{\beta}_{s,s'}(d_n) 
            \end{align*}
        \item \textbf{Concentration Bound using Mcdiarmid.}
            Note that a one \textit{block} difference yields an empirical deviation of $M \frac{d_n}{n}$. By Mcdiarmid's inequality, 
            \begin{align*}
                \pr[\frac{1}{|\B_z'|} \sum_{i \in \B_z'} \Tinst (f_A(\Xouti, \outCoeftx) - \E [f_A(\Xouti, \outCoeftx)] > \delta] & \leq \exp{(\frac{-2 |D_n|^2 (\delta+\E [f_A(\Xouti, \outCoeftx)])^2}{L_n d_n^2 M^2})}
            \end{align*} 
        \item   \textbf{Empirical Process Bound.}
            It remains to bound $\E h(\B_z')$. By symmetrization and Dudley's integral, invoking \labelcref{ass:donskerOutcome}
            \begin{align*}
                \E h(\B_z') &\leq \E \sup_{\outCoeftx} \frac{1}{|\B_z'|} \sum_{l} \sum_{i \in B_{z,l}' }\sigma_i f_A(\Xouti, \outCoeftx) \\
                & \lesssim \frac{M}{\sqrt{n}} 
            \end{align*}
            for $\sigma_i$ i.i.d. Rachemacher r.v.'s.
        
            Integrating out the tail and employing \labelcref{ass:mixing}, we have :
            $$ \E \sup_{\outCoeftx} |\frac{1}{n} \sum_{i=1}^n \Tinst f_A(\Xouti, \outCoeftx_0) - \E[\Tinst f_A(\Xouti, \outCoeftx_0)]| \lesssim \sqrt{\frac{M}{|D_n|}} + d_n \sqrt{\frac{M}{|D_n|}}$$
        
            We can choose $d_n$ such that $\frac{d_n}{|D_n|}=o_p(1)$, or $d_n$ grows slower than $D_n$. Noting that WLOG, we can take $\rho_0 = 1$, the result follows.
    \end{enumerate}
\end{proof}

\subsection{Proof of \texorpdfstring{\cref{thm:ParetoFrontEst}}{Theorem}}
\begin{proof}\label{pf:ParetoFrontEst}
    Recall, we take $K = \sqrt{n}$, meaning $\frac{\lambda}{K}=\frac{\lambda}{\sqrt{n}}$. By the triangle inequality,
    \begin{align*}
        \E[\sup_{\pmb{\Wweight}, \pi} |\sum_s \Wweight_s W_s(\pi) - \inf_{a_k} \{ \sum_s \Wweight_{ks} \hat{W}_s(\pi) \}|] &\lesssim \underbrace{\E \sup_{\pmb{\Wweight}, \pi} |\sum_s \Wweight_s W_s(\pi) - \Wweight_s \hat{W}_s(\pi)|}_{(I)} \\
        &\quad\quad\quad + \frac{\lambda}{\sqrt{n}} \\
        &\quad\quad\quad + \underbrace{\E \sup_{\pmb{\Wweight}, \pi}|\sum_s \Wweight_s \hat{W}_s(\pi) - \inf_{\pmb{\Wweight}_k} \sum_s \Wweight_{ks} \hat{W}_s(\pi)|}_{(II)} 
    \end{align*}

    \textbf{Term $I$ } We bound this using a concentration result based on \labelcref{ass:samplingRegime}-\labelcref{ass:mixing}. We decompose term $I$, and bound each of the terms.
    \begin{align*}
        \E \sup_{\pmb{\Wweight},\pi} |\sum_s \Wweight_s W_s(\pi)-\Wweight_s \hat{W}_s(\pi)| &\leq \sum_s \underbrace{\E \sup_{\pmb{\Wweight},\pi} |\Wweight_s W_s(\pi)-\Wweight_s W_{s,n}(\pi)|}_{(A)} + \underbrace{\E \sup_{\pmb{\Wweight},\pi} |\Wweight_s W_{s,n}(\pi) - \Wweight_s \hat{W}_s(\pi)|}_{(B)}
    \end{align*}

    \textbf{Term A}\\ 
    By \cref{lemma:welfarePopEta}, it suffices to consider the following, where recall we have let $\EstTotEff_j(s) = \frac{1}{n} \sum_{i=1}^n \I{S_i=s} \Tinst f_A(\Xouti, \outCoeftx_0) $.
    \begin{align*}
        A &= |\frac{1}{J} \sum_{j=1}^J \frac{1}{n}\sum_{i=1}^n \frac{\I{S_i=s}}{p_s}\Tinst f_A(\Xouti, \outCoeftx_0) \pi(A_j=1\mid \eta_j, \Xintj)  \\
        & \quad\quad\quad\quad -\E[\Tinst \frac{\I{S_i=s}}{p_s} f_A(\Xouti, \outCoeftx_0) \pi(A_j=1\mid \eta_j, \Xintj)]| \\
        &= |\frac{1}{J} \sum_{j=1}^J \EstTotEff_j(s) \pi(A_j=1\mid \eta_j, \Xintj) -\E[\EstTotEff_j(s)  \pi(A_j=1\mid \eta_j, \Xintj)]| \\
        &= |\frac{1}{J} \sum_{j=1}^J \EstTotEff_j(s)  \pi(A_j=1\mid \eta_j, \Xintj) - \E[\EstTotEff_j(s)]\pi(A_j=1\mid \eta_j, \Xintj)|\\
        & \quad\quad\quad\quad + |\E[\EstTotEff_j(s) ]\pi(A_j=1\mid \eta_j, \Xintj) -\E[\EstTotEff_j(s) \pi(A_j=1\mid \eta_j, \Xintj)]| \\
        &= |\frac{1}{J} \sum_{j=1}^J \pi(A_j=1\mid \eta_j, \Xintj) [\EstTotEff_j(s)  - \E[\EstTotEff_j(s) ]]|\\
        & \quad\quad\quad\quad + |\E[\EstTotEff_j(s) ](\pi(A_j=1\mid \eta_j, \Xintj) -\E[\pi(A_j=1\mid \eta_j, \Xintj)])| 
    \end{align*}
    
    Now, note that by \cref{lemma:TE_Est_Concentration} and boundedness of $\frac{\I{S_i=s}}{p_s}$, 
    $$ |\EstTotEff_j(s)  - \E[\EstTotEff_j(s) ]| \lesssim \frac{M}{\sqrt{n}} $$

    Further, by \labelcref{ass:donskerPolicy},
    $$|\frac{1}{J}\sum_{j=1}^J (\pi(A_j=1\mid \eta_j, \Xintj) -\E[\pi(A_j=1\mid \eta_j, \Xintj)]| \lesssim \frac{1}{\sqrt{n}}$$ 

    Hence, $$ |A| \lesssim \frac{M}{\sqrt{n}}$$
    
    \textbf{Term B}
    \begin{align*}
        \E \sup_{\pmb{\Wweight},\pi} |\Wweight_s W_{s,n}(\pi) - \Wweight_s \hat{W}_s(\pi)| &\leq \E \sup_{\pmb{\Wweight},\pi} |\Wweight_s W_{s,n}(\pi)- \Wweight_s \E[\hat{W}_s(\pi)]| + |\Wweight_s \E[\hat{W}_s(\pi)] - \Wweight_s \hat{W}_s(\pi)| \\
        & \leq \E \sup_{\pmb{\Wweight},\pi} |\Wweight_s W_{s,n}(\pi)- \Wweight_s \E[\hat{W}_s(\pi)]| + \frac{M}{\sqrt{n}}
    \end{align*}
    where the second line follows from \cref{lemma:TE_Est_Concentration}. We now bound the remaining term. 
    \begin{align*}
        \E \sup_{\pmb{\Wweight},\pi} |\Wweight_s W_{s,n}(\pi)- \Wweight_s \E[\hat{W}_s(\pi)]| &= \E \sup_{\pmb{\Wweight},\pi} |\frac{1}{n} \sum_{i=1}^n \frac{1}{J} \sum_{j=1}^J \Tinst \frac{\I{S_i=s}}{p_s} f_A(\Xouti, \outCoeftx_0)  \pi_j \\
        &\quad\quad\quad\quad -\E[\Tinst \frac{\I{S_i=s}}{p_s} f_A(\Xouti, \hat{\outCoeftx}_0)  \pi_j] |
    \end{align*}
    The structure is the same as term A, except $\outCoeftx$ is estimated. Since \cref{lemma:TE_Est_Concentration} is a supremum bound over $\outCoeftx$, $$ |B| \lesssim \frac{M}{\sqrt{n}}$$
    
    \textbf{Term $II$} Note that $II$ represents the maximum difference between the approximated grid at resolution $K$ and the true minimizer, under welfare $\hat{W}_s(\pi)$. Thus, we can bound as follows, using \labelcref{ass:bounded}:
    $$ II \lesssim \frac{1}{K} \sup_{\pi,s}|\hat{W}_s(\pi)| = \frac{M}{K}$$

    \textbf{Final Result.}
    Together, we have 
    \begin{align*}
        \sup_{\Wweight, \pi}|\sum_s \Wweight_s W_s(\pi) - \inf_{\Wweight_k} \{ \sum_s \Wweight_{ks} \hat{W}_s(\pi) \}| &\lesssim \frac{\lambda + M}{\sqrt{n}}
    \end{align*}
\end{proof}

\subsection{Proof of \texorpdfstring{\cref{thm:ParetoFrontSupp}}{Theorem}}

\begin{proof}\label{pf:ParetoFrontSupp}
    Let $\overline{W}_{\Wweight}=\inf_\pi \Wweight W_0(\pi)+(1-\Wweight) W_1(\pi)$, and $\overline{W}_{k,n}=\inf_\pi \Wweight_k W_{0,n}(\pi) +(1-\Wweight_k) W_{1,n}(\pi)$

    Since $k\in [K]$ are equally spaced, it suffices to show that the probability of deviation between any object on the front and the true empirical welfare, up to the slack $\lambda$, is bounded by $\gamma$:
    $$ \mathbb{P}(\forall \Wweight \in (0,1), \max_{k \in [K]}| \overline{W}_\Wweight  - \overline{W}_{k,n}| \geq \frac{\lambda}{\sqrt{n}} )  \leq \gamma$$
    Note that for any $\pmb{\Wweight}, k$ (by equal-spacing of the grid),
    \begin{align*}
        |\overline{W}_\Wweight  - \overline{W}_{k,n} |&= 
        |\overline{W}_\Wweight - \sum_s \Wweight_s \hat{W}_s(\pi) + \sum_s \Wweight_s \hat{W}_s(\pi) - \overline{W}_{k,n} | \\
        &\leq |\overline{W}_\Wweight - \sum_s \Wweight_s \hat{W}_s(\pi)| + |\sum_s \Wweight_s \hat{W}_s(\pi) - \overline{W}_{k,n}| 
    \end{align*}
    
    Then by Markov's inequality,
    \begin{align*}
        \mathbb{P}(\forall \Wweight \in (0,1), \max_{k \in [K]}, |\overline{W}_\Wweight  - \overline{W}_{k,n} | \geq \frac{\lambda}{\sqrt{n}}  )  &\leq \frac{ \lambda \E[|\overline{W}_\Wweight  - \overline{W}_{k,n} |]}{\sqrt{n}}  \\
        & \leq C M \lambda 
    \end{align*}
    The last step follows by \cref{thm:ParetoFrontEst} for some universal constant $C$. It suffices to take $\gamma \geq C M \lambda$ or $\lambda \leq \frac{\gamma}{C M} $.
\end{proof}

\subsection{Proof of \texorpdfstring{\cref{thm:RegBd}}{Theorem}}

\begin{proof}\label{pf:RegBd}
    Let $\Imb(\pi)=|W_1(\pi)-W_0(\pi)| $ and $\widehat{\Imb}(\pi)=|\hat{W}_1(\pi)-\hat{W}_0(\pi)|$.

    The minimizer of $\Imb(\pi)$ is $\inf_{\pi \in \Pi_0} \Imb(\pi)$. Denote the solution to the objective (\cref{eq:optProb}) by $\hat{\pi}_{\lambda}$. 

    By \cref{thm:ParetoFrontSupp}, for $\gamma(n)$ chosen as $\frac{1}{\sqrt{n}}$,
    \begin{align*}
        \E[| \Imb(\hat{\pi}_\lambda) - & \inf_{\pi \in \Pi_0}\Imb(\pi) |] = \pr(\{ \Pi_0 \subseteq \hat{\Pi}_0 \})\E[|\Imb(\hat{\pi}_\lambda) -\inf_{\pi \in \hat{\Pi}_0(\lambda)} \Imb(\pi)|] \\
        & \quad\quad\quad + (1-\pr(\{ \Pi_0 \subseteq \hat{\Pi}_0 \})) \E[|\Imb(\hat{\pi}_\lambda) -\inf_{\pi \not \in \hat{\Pi}_0(\lambda)} \Imb(\pi)|]  \\
        & = (1-\gamma(n)) \E[|\Imb(\hat{\pi}_\lambda) -\inf_{\pi \in \hat{\Pi}_0(\lambda)} \Imb(\pi)|] \\
        & \quad\quad\quad + \gamma(n) \E[|\Imb(\hat{\pi}_\lambda) -\inf_{\pi \not \in \hat{\Pi}_0(\lambda)} \Imb(\pi)|]  \\
        & \lesssim (1-\frac{1}{\sqrt{n}}) \E[|\Imb(\hat{\pi}_\lambda) -\inf_{\pi \in \hat{\Pi}_0(\lambda)} \Imb(\pi)|] \\
        & \quad\quad\quad + \frac{1}{\sqrt{n}} M  
    \end{align*}

    It suffices to bound  
    \begin{align*}
        \E[|\Imb(\hat{\pi}_\lambda) -\inf_{\pi \in \hat{\Pi}_0(\lambda)} \Imb(\pi)|]  & = \E[|\Imb(\hat{\pi}_\lambda) - \widehat{\Imb}(\hat{\pi}_\lambda)|] \\
        &\quad\quad + \E[|\widehat{\Imb}(\hat{\pi}_\lambda)  - \inf_{\pi \in \hat{\Pi}_0(\lambda)} \Imb(\pi) |] \\
        & \leq 2 \E \sup_{\pi \in \hat{\Pi}_0(\lambda)} |\Imb(\pi)-\widehat{\Imb}(\pi)| \\
        & \leq 2 \E \sup_{\pi \in \Pi} |\Imb(\pi)-\widehat{\Imb}(\pi)| \\
        & \lesssim \frac{M+\lambda}{\sqrt{n}}
    \end{align*}
    
    where the last step follows by \cref{thm:ParetoFrontEst}.
\end{proof}
 
\section{Additional Simulation Details}\label{sec:sim_details}

In our simulation, parameters are chosen to align closely with our real world application. $\intCoef_0$ is chosen so that simulated average of propensities $\frac{1}{J} \sum_{j=1}^J e_j$ is within 0.01 of the observed empirical average of treatments in the dataset (see final row of \cref{tab:pp_covs}). Further, $\outCoef_0=(\outCoefb,\outCoeftx)$ is chosen so that the simulated average $\frac{1}{n}\sum_{i=1}^n Y_i$, based on the expected exposure level $\bar{A}=\frac{1}{J}\sum_{j=1}^J \Tinst e_j$, is within 0.01 of the empirical average in the dataset (see \cref{tab:zip_covs}). 

\textbf{Outcome Model.}
We simulate $f_0, f_A$ as linear in $\Xouti$, where $\alpha_{\mathrm{intercept}} \in \mathbb{R}$ and $\outCoefb_1 \in \mathbb{R}^{\mathrm{dim}(\Xouti)}$. Similarly, $\beta_{\mathrm{intercept}} \in \mathbb{R}$ and $\outCoeftx_1 \in \mathbb{R}^{\mathrm{dim}(\Xouti)}$. 
\begin{equation} \label{eq:outBaseModelSim}
    f_0(\Xouti, \alpha)=\alpha_{\mathrm{intercept}} + \Xouti  \outCoefb_{1}
\end{equation}
\begin{equation}\label{eq:outTxModelSim}
    f_A(\Xouti, \beta)= \beta_{\mathrm{intercept}} +  \Xouti \outCoeftx_{1} 
\end{equation}

\textbf{Propensity Score Model.}
 We model $e_j$ using a logistic regression model where $\gamma_{\mathrm{intercept}} \in \mathbb{R}$ and $\intCoef_1 \in \mathbb{R}^{\mathrm{dim}(\Xouti)}$.
\begin{align*} \label{eq:intPropModelSim}
    \log \left(\frac{e_j}{1-e_j}\right) &= \gamma_{\mathrm{intercept}} + \Xintj \intCoef_{1}
    \numberthis
\end{align*}

Our simulation setup is as follows:
\begin{enumerate}
    \item \textbf{ Preprocess data}: Standardize the covariates of $\Xouti, \Xintj$.
    \item \label{sim:stepGenTx} \textbf{ Generate intervention unit treatments}: $A_j \sim \mathrm{Bernoulli}(e_j)$, where $e_j$ follows \cref{eq:intPropModelSim}.  
    \item \textbf{ Compute exposure mapping}: $\Ei=\frac{1}{J} \sum_{j \in [J]}\Tinst A_j$
    \item \textbf{ Generate outcomes}: $Y_i(\Ei)$ such that the SNR is \SNR, i.e. 
        $$Y_i=f_0(\Xouti, \outCoefb) + \Ei \cdot f_A(\Xouti,\outCoeftx) + \epsilon_i$$ 
        where 
         $\mathbb{V}(\epsilon_i)=\mathbb{V}\left(\frac{\E[Y_i \mid \Xouti, \T_i, \Ei; \outCoef]}{\SNR}\right)$ 
        and $f_0$, $f_A$ follow \cref{eq:outBaseModelSim,eq:outTxModelSim} respectively. 
    \item \textbf{Estimate parameters using \al}: $\outCoef=(\outCoefb, \outCoeftx)$
    \item \textbf{Run Fair Policy Learning.}\label{sim:fairPolicy}
    \begin{itemize}
        \item Fair Policy Learning, as described in \cref{sec:method}.
        \item Welfare Maximization, from the Utilitarian Perspective as described in \cite{KitagawaTetenov18}
    \end{itemize}
    \item \textbf{ Iterate}: Repeat steps (\labelcref{sim:stepGenTx}-\labelcref{sim:fairPolicy}) 1000 times.
    \begin{enumerate}
        \item Compute $W_0(\pi), W_1(\pi), \Imb(\pi)$ for each fair policy learner.
        \item Average these metrics across 1000 simulations.
    \end{enumerate}
\end{enumerate}

The specific simulation parameters we use are as follows:
\begin{align*}
    \pmb{\theta}_0=( &0.649,0.963,0.33,0.411,-0.481,0.733,0.566,0.343, \\
    &0.058,-0.934,-0.277,-0.995,0.709,0.419,-0.505,\\
    &0.517,0.03,-0.723,0.854,-0.496,-0.393,0.316,\\
    &0.487,-0.444,-0.653,-0.052,0.931,0.143)
    \\
    \pmb{\gamma}_0=(&-0.997,-0.447,-0.04,0.021,0.806,-0.689,-0.823, \\
    &-0.909,-0.658,-0.101,0.908,0.911,0.193,0.408,\\
    &-0.835,0.392,0.625,0.13,0.022,0.073)
\end{align*}

\section{Descriptive Statistics}
We compute basic descriptive statistics on our outcome and intervention dataset.

\setcounter{table}{0}
\begin{table}[H]
\begin{tabular}{lll}
Variable                             & Mean   & Range          \\ \hline
Total NO$_x$ controls               & 2.83  & (0, 24)        \\
log(Heat input)                     & 14.52 & (8.98, 17.32) \\
log(Operating time)                  & 7.19  & (5.39, 8.93)   \\
\% Operating capacity                 & 0.66  & (0.072, 1.17)   \\
\% Selective non-catalytic reduction  & 0.23  & [0, 3]         \\
ARP Phase II                        & 0.70  & \{0, 1\}  \\ \hline
Scrubbed                        & 0.23  & \{0, 1\}  \\ \hline
\end{tabular}
\caption{\textmd{Summary of intervention level covariates from power plant data $(J=459)$.}}\label{tab:pp_covs}
\end{table}

\setlength{\floatsep}{2pt}
\setlength{\textfloatsep}{4pt}

\begin{table}[H]
\begin{tabular}{lll}
Variable                            & Mean   & Range          \\ \hline
\% White                                  & 0.89  & (0, 1)         \\ 
\% Female                                    & 0.55  & (0, 1)         \\
\% Urban                                    & 0.42  & (0, 1)         \\
\% High school graduate                  & 0.34  & (0, 1)         \\
\% Poor                                & 0.13  & (0, 1)   \\
\% Moved in last 5 years                     & 0.43  & (0, 1)         \\
\% Households occupied                      & 0.87  & (0.015, 1)      \\
\% Smoke                                & 0.25  & (0.096, 0.44)   \\
Mean Medicare age                           & 74.87 & (68.00, 96.26)    \\
Mean temperature (K)                        & 287.14 & (272.51, 301.14) \\
Mean relative humidity (\%)                & 0.0081 & (0.0033, 0.017) \\
log(Population)                                & 8.24  & (1.39, 11.65)  \\
Population per Square Mile                   & 1307.88  & (0.023, 158503.38)  \\
\hline
Mortality Rate (per PY)                    & 0.046 & (0, 0.39) \\
\end{tabular}
\caption{\textmd{Summary of outcome level covariates from Medicare Beneficiary data ($n=35,036$).}}\label{tab:zip_covs}
\end{table}


\clearpage

\section{Additional Analyses}\label{sec:SensDiffSPercs}
We show the histogram of \dual across our 35,036 ZIP codes in \cref{fig:S_Distro_Dual}. This plot shows percent poor is concentrated towards lower values, and the upper end of \dual shows highly impoverished ZIP codes. 

\begin{figure}[ht!]
\includegraphics[width=1\textwidth]{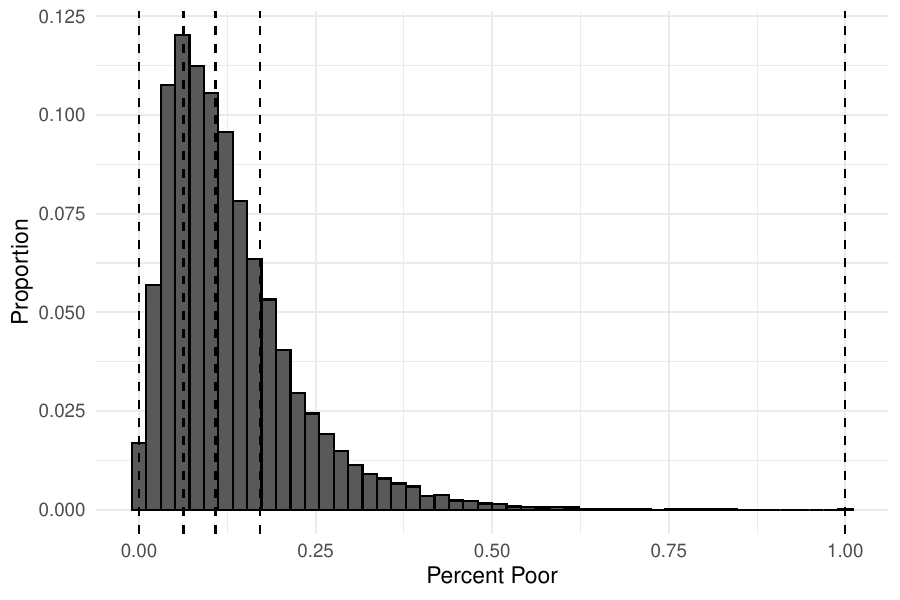} 
\caption{Distribution of \dual across all ZIP codes in the U.S.\label{fig:S_Distro_Dual}}
\end{figure}

We also demonstrate the Total effects on a map, overall and stratified by subgroup \cref{fig:Hist_TEs_PctPoor}, followed by the plot of these estimates on a graph \cref{fig:mapAndHistTEdual}.

\begin{figure}[ht!]
\centering
    \includegraphics[width=1\textwidth]{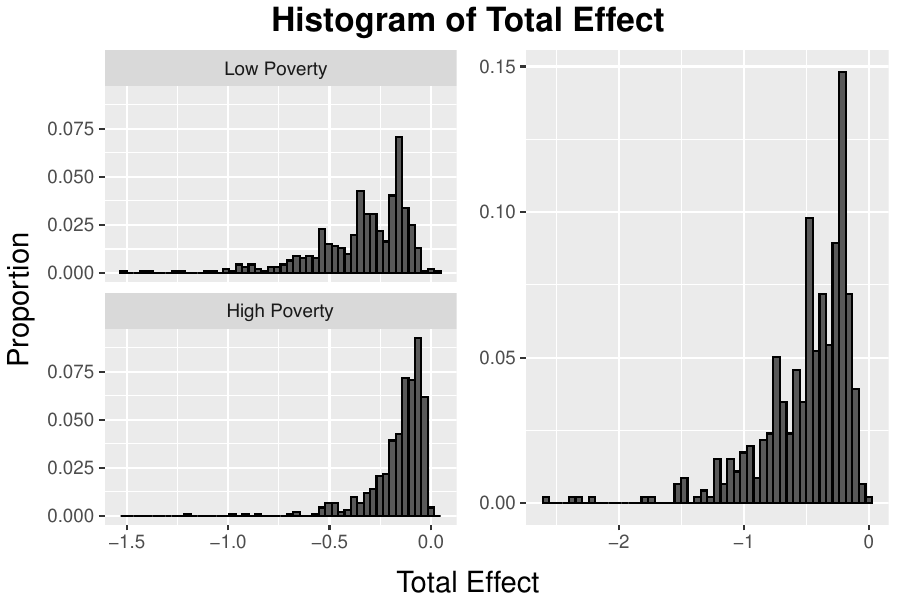} 
\caption{$\EstTotEff_j$ for \dual, the \dualHigh, and the \dualLow. The left shows the overall $\EstTotEff_j$ for \dual, and the right panel shows the breakdown by subgroup with $\EstTotEff(1)$ or \dualHigh on the bottom left and  $\EstTotEff(0)$ or \dualLow on the top left. \label{fig:Hist_TEs_PctPoor}}
\end{figure}

\begin{figure}[ht!]
\centering
\includegraphics[width=1\textwidth]{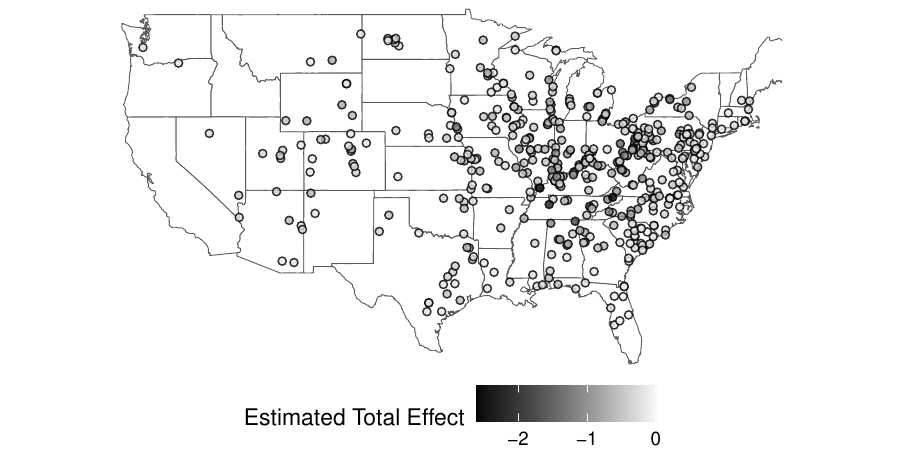} 
\caption{The distribution of $\EstTotEff_j$ on the U.S. map. The darker the bubble is, the greater the reduction on mortality rates the scrubber installation yields. \label{fig:mapAndHistTEdual}}
\end{figure}

\subsection{Comparison of Fair and Optimal Policy Plots}


\begin{figure}[ht!]
\centering
 \resizebox{\textwidth}{!}{
\begin{tabular}{ccc}
    \includegraphics[width=0.33\textwidth]{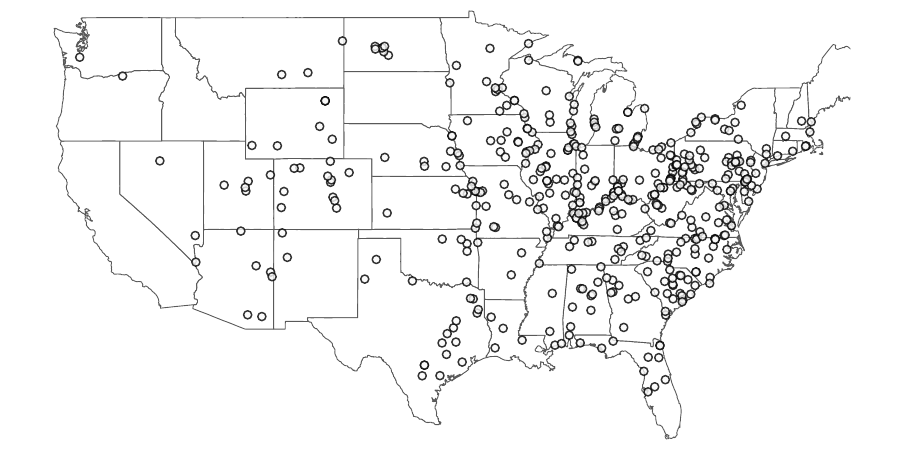}
     &
     \includegraphics[width=0.33\textwidth]{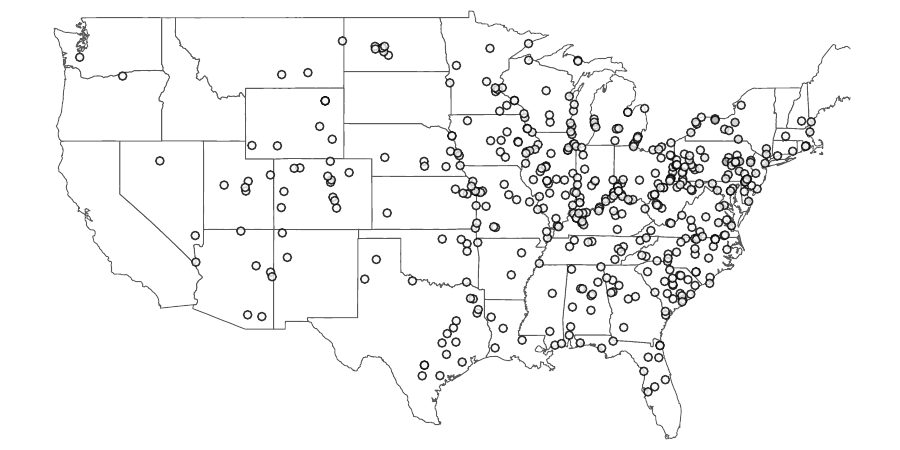} &

     \includegraphics[width=0.33\textwidth]{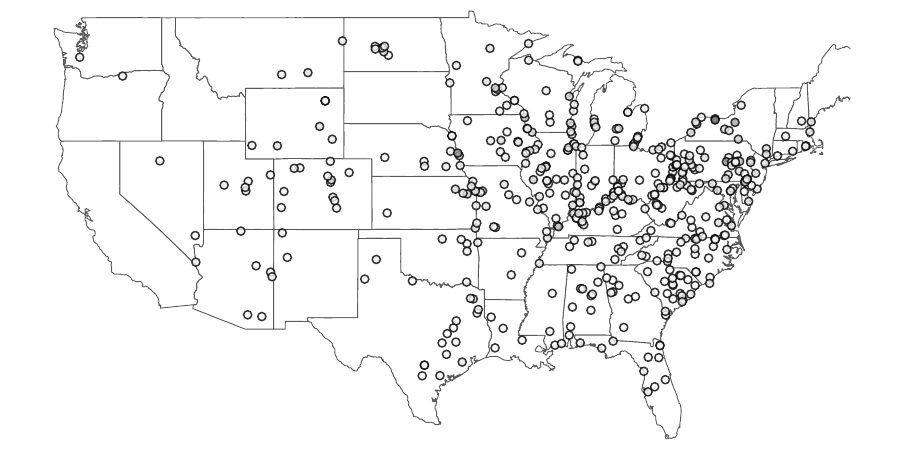}
     
     \\
     (a) 10\% Total Cost & (b) 20\% Total Cost  & (c) 30\% Total Cost 
\end{tabular}}

 \resizebox{\textwidth}{!}{
\begin{tabular}{ccc}
    \includegraphics[width=0.33\textwidth]{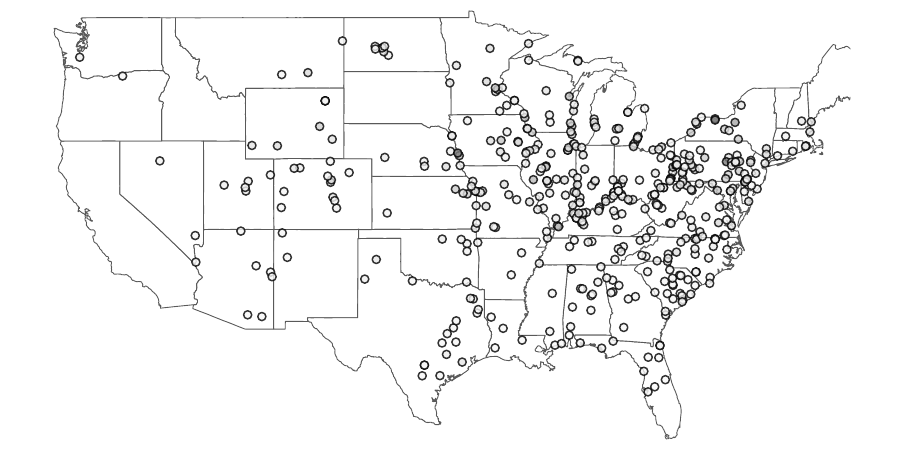}
     &
     \includegraphics[width=0.33\textwidth]{figures/rwd/0.75/policy_cmp/Map_FairPolicy_PctPoor_prop=0.5.pdf} &

     \includegraphics[width=0.33\textwidth]{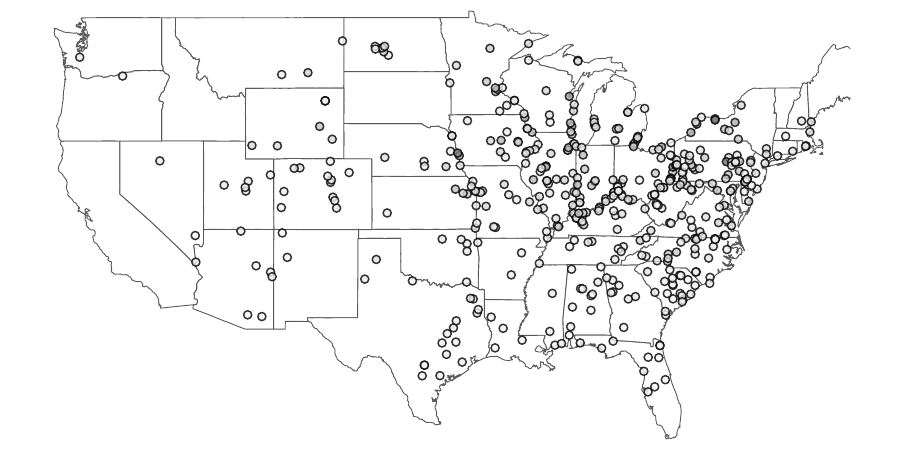}
     
     \\
     (a) 40\% Total Cost & (b) 50\% Total Cost  & (c) 60\% Total Cost 
\end{tabular}}

 \resizebox{\textwidth}{!}{
\begin{tabular}{ccc}
    \includegraphics[width=0.33\textwidth]{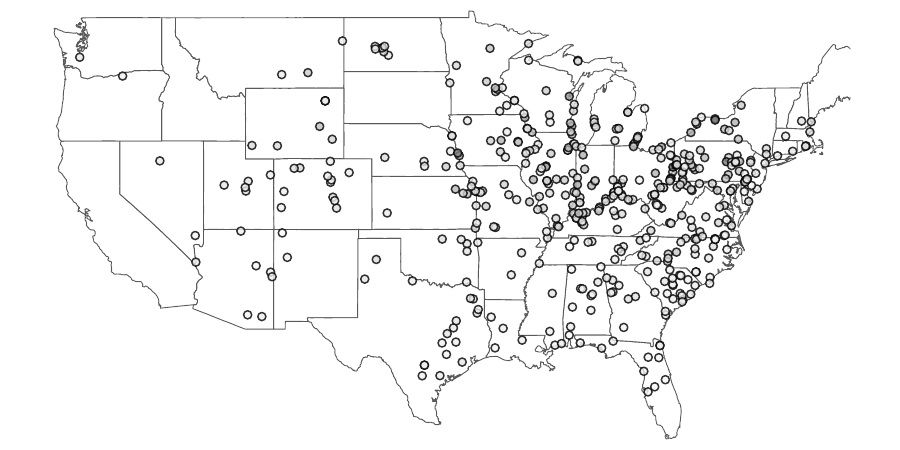}
     &
     \includegraphics[width=0.33\textwidth]{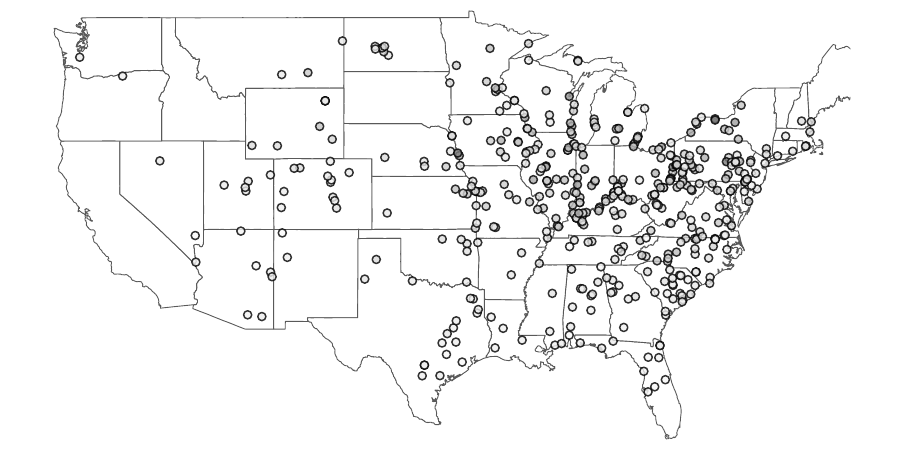} &

     \includegraphics[width=0.33\textwidth]{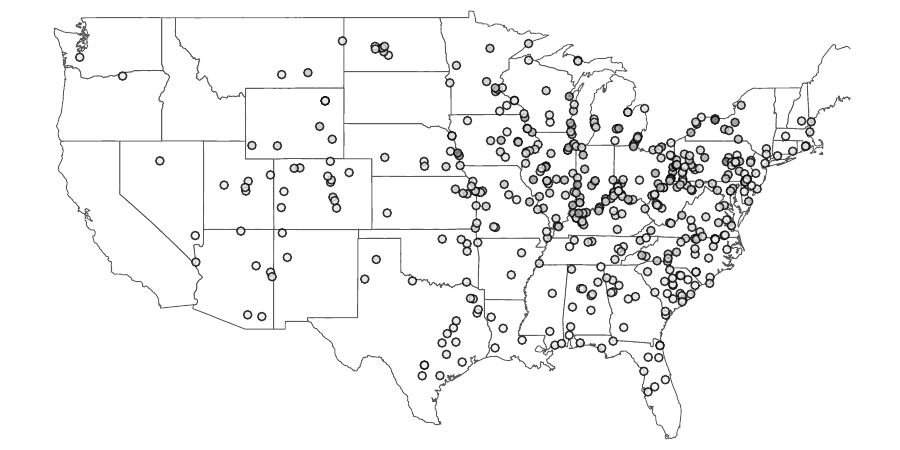}
     
     \\
     (a) 70\% Total Cost & (b) 80\% Total Cost  & (c) 90\% Total Cost 
\end{tabular}}
\caption{\textmd{Grid with the Mortality Reduction from the Fair Policy, varying the spending from 10\%-90\% of budget.}
\label{fig:panelFair}}
\end{figure}

\begin{figure}[ht!]
\centering
 \resizebox{\textwidth}{!}{
\begin{tabular}{ccc}
    \includegraphics[width=0.33\textwidth]{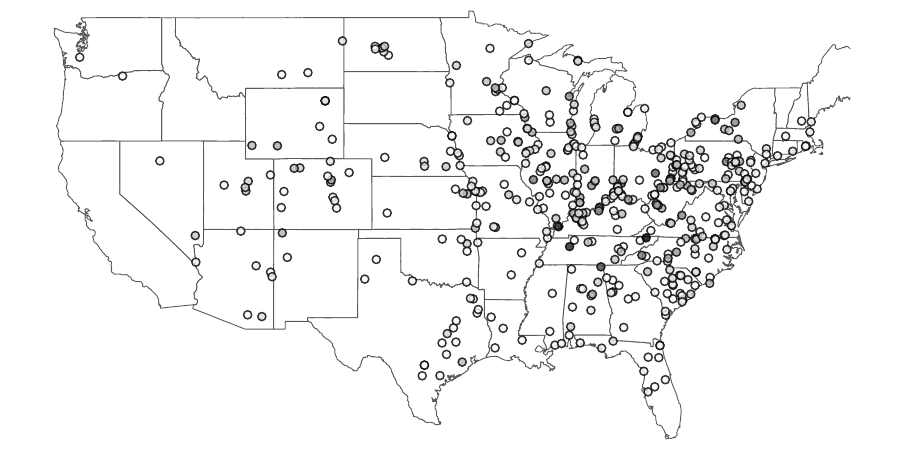}
     &
     \includegraphics[width=0.33\textwidth]{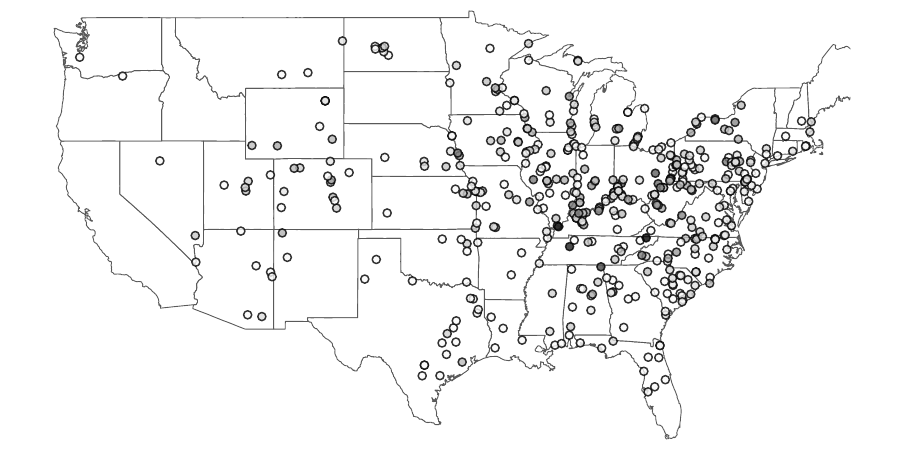} &

     \includegraphics[width=0.33\textwidth]{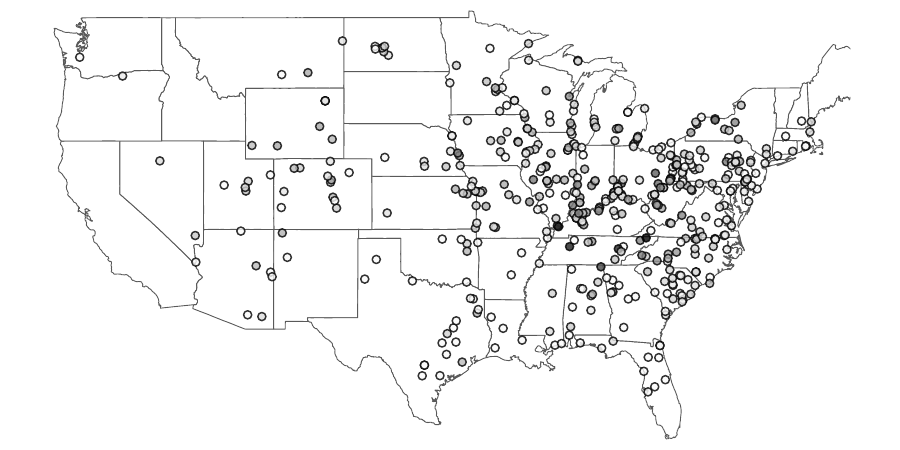}
     
     \\
     (a) 10\% Total Cost & (b) 20\% Total Cost  & (c) 30\% Total Cost 
\end{tabular}}

 \resizebox{\textwidth}{!}{
\begin{tabular}{ccc}
    \includegraphics[width=0.33\textwidth]{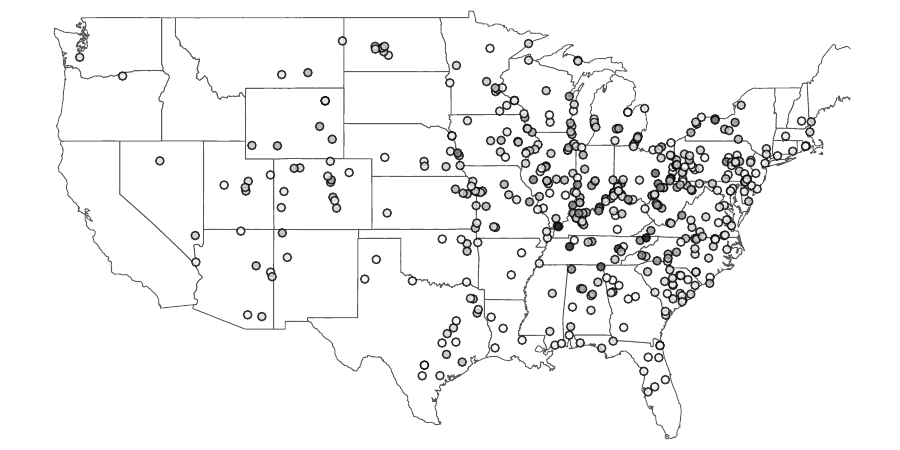}
     &
     \includegraphics[width=0.33\textwidth]{figures/rwd/0.75/policy_cmp/Map_OptimalPolicy_PctPoor_prop=0.5.pdf} &

     \includegraphics[width=0.33\textwidth]{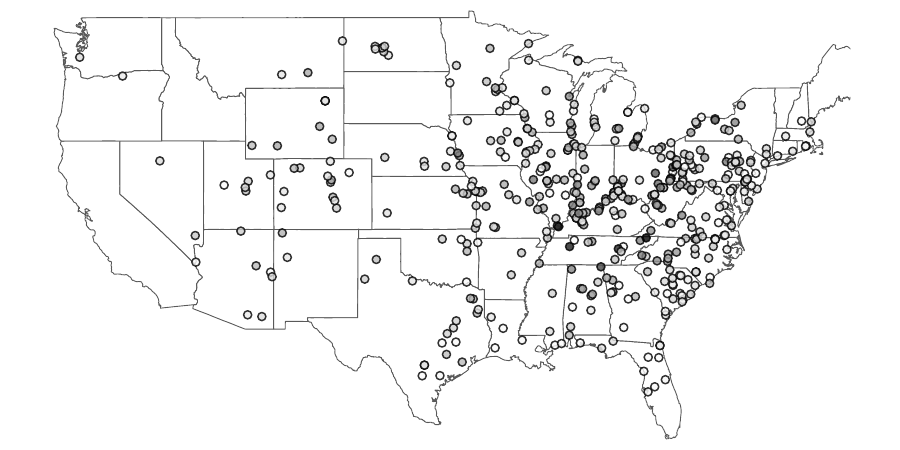}
     
     \\
     (a) 40\% Total Cost & (b) 50\% Total Cost  & (c) 60\% Total Cost 
\end{tabular}}

 \resizebox{\textwidth}{!}{
\begin{tabular}{ccc}
    \includegraphics[width=0.33\textwidth]{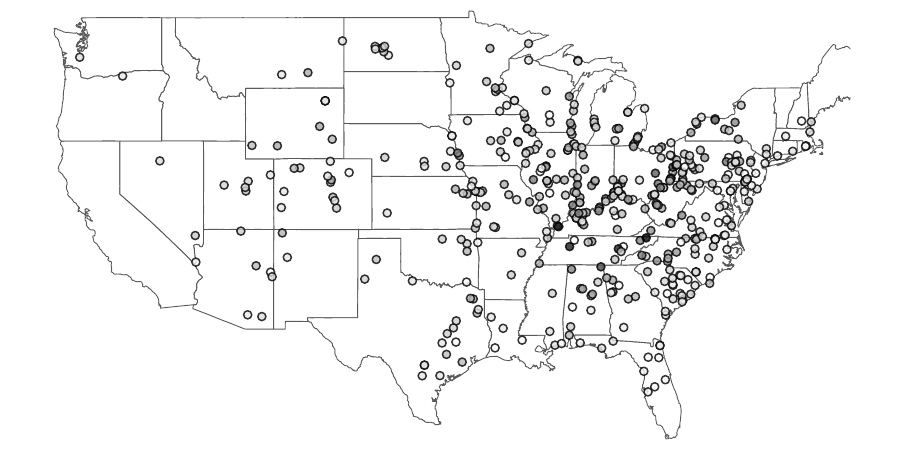}
     &
     \includegraphics[width=0.33\textwidth]{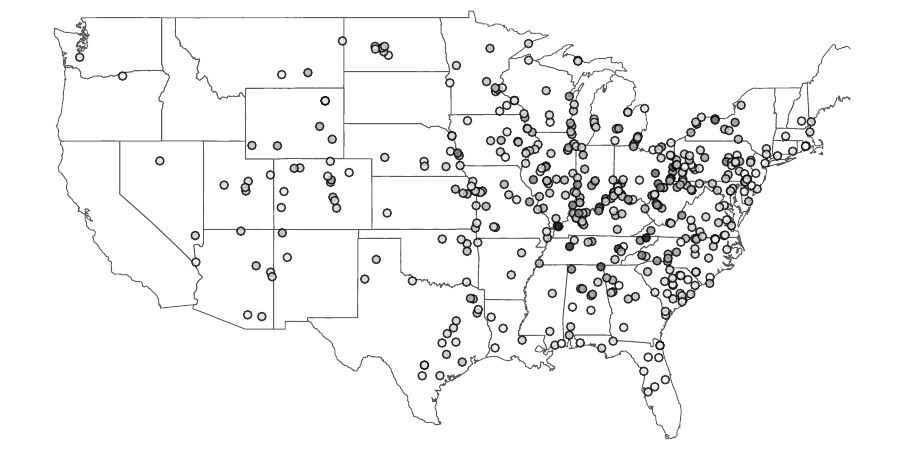} &

     \includegraphics[width=0.33\textwidth]{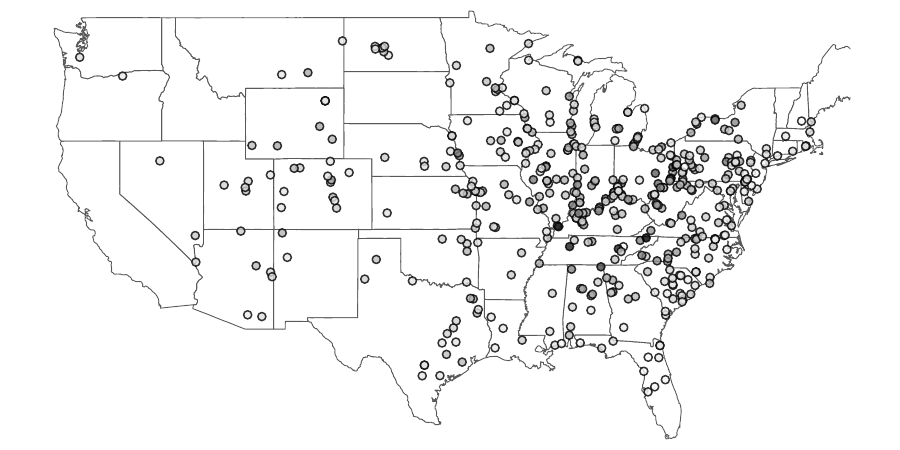}
     
     \\
     (a) 70\% Total Cost & (b) 80\% Total Cost  & (c) 90\% Total Cost 
\end{tabular}}
\caption{\textmd{Grid with the Mortality Reduction from the Optimal Policy, varying the spending from 10\%-90\% of budget.}
\label{fig:panelOptimal}}
\end{figure}






\end{document}